\providecommand{\tabularnewline}{\\}
\theoremstyle{definition}
\newtheorem*{defn*}{\protect\definitionname}
\theoremstyle{plain}
\newtheorem{prop}{\protect\propositionname}
\theoremstyle{plain}
\newtheorem{fact}{\protect\factname}
\theoremstyle{remark}
\newtheorem*{rem*}{\protect\remarkname}
\theoremstyle{plain}
\newtheorem{lem}{\protect\lemmaname}
\theoremstyle{plain}
\newtheorem{cor}{\protect\corollaryname}
\theoremstyle{plain}
\newtheorem{thm}{\protect\theoremname}
\theoremstyle{remark}
\newtheorem*{acknowledgement*}{\protect\acknowledgementname}
\providecommand{\acknowledgementname}{Acknowledgement}
\providecommand{\corollaryname}{Corollary}
\providecommand{\definitionname}{Definition}
\providecommand{\factname}{Fact}
\providecommand{\lemmaname}{Lemma}
\providecommand{\propositionname}{Proposition}
\providecommand{\remarkname}{Remark}
\providecommand{\theoremname}{Theorem}
\begin{document}

\title{Optimal two-qubit states for quantum teleportation vis-à-vis state
properties}

\author{Arkaprabha Ghosal}
\email{a.ghosal1993@gmail.com}

\affiliation{Centre for Astroparticle Physics and Space Science, Bose Institute,
Block EN, Sector V, Salt Lake, Kolkata 700 091, India}

\author{Debarshi Das}
\email{dasdebarshi90@gmail.com}

\affiliation{Centre for Astroparticle Physics and Space Science, Bose Institute,
Block EN, Sector V, Salt Lake, Kolkata 700 091, India}

\author{Saptarshi Roy}
\email{saptarshiroy@hri.res.in}

\affiliation{Harish-Chandra Research Institute, HBNI, Chhatnag Road, Jhunsi, Allahabad
211 019, India}

\author{Somshubhro Bandyopadhyay}
\email{som@jcbose.ac.in}

\affiliation{Department of Physics and Centre for Astroparticle Physics and Space
Science, Bose Institute, Block EN, Sector V, Salt Lake, Kolkata 700
091, India}
\begin{abstract}
Quantum teleportation with a two-qubit state can be suitably characterized
in terms of maximal fidelity and fidelity deviation -- the former
is the maximal value of the average fidelity achievable within the
standard protocol and local unitary operations and the latter is the
standard deviation of fidelity over all input states. In this paper,
we consider the problem of characterizing two-qubit states that are
optimal for quantum teleportation for a given value of some state
property. The optimal states are defined as those states that, for
a given value of the state property under consideration, achieve the
largest maximal fidelity and also exhibit zero fidelity deviation.
We provide a complete characterization of optimal states for a given
linear entropy, maximum mean value of the Bell-CHSH observable, and
concurrence, respectively. We find that for a given linear entropy
or Bell-CHSH violation, the largest maximal fidelity states are optimal,
but for a given concurrence, the optimal states form a strict subset
of the largest maximal fidelity states. 
\end{abstract}
\maketitle

\section{Introduction}

Quantum teleportation \citep{teleportation-93} is a fundamental protocol
to transmit quantum information using shared entanglement and classical
communication. Perfect teleportation requires maximally entangled
states which can only be established via noiseless quantum channels.
In practice, however, quantum channels are noisy; thus available states
are typically mixed entangled and teleportation will not be perfect.
The standard figure of merit for quantum teleportation is the average
fidelity \citep{RMP-Horodecki-96,Gisin-1996,MPR-Horodecki-1999,VV-2003},
which is a measure of the expected closeness between the input and
the output states. The average fidelity is a pretty good indicator
of how useful a given entangled state is for quantum teleportation,
but gives no information on fidelity fluctuations. Such fluctuations
can be appropriately quantified by fidelity deviation, which is defined
as the standard deviation of fidelity over all input states \citep{Bang-et-al-2012,Bang-et-al-2018}.
The notions of average fidelity and fidelity deviation are completely
general and apply to quantum teleportation in any finite dimension,
although in this paper we restrict ourselves to quantum teleportation
with two-qubit states. 

For a two-qubit state, the maximal fidelity \citep{RMP-Horodecki-96,Badziag-2000}
is the maximal value of the average fidelity achievable over all strategies
within the standard protocol supplemented by local unitary operations
and is given by a simple formula, first derived in \citep{RMP-Horodecki-96}.
A protocol, which achieves this maximal value is said to be optimal.
Fidelity deviation, on the other hand, is something one would like
to minimize but not at the expense of maximal fidelity, as pointed
out in \citep{Ghosal-deviation-2019}. The authors argued that a protocol
which achieves the minimal fidelity deviation is not guaranteed to
achieve the maximal fidelity; hence, one should compute fidelity deviation
only for optimal protocols; in particular, they obtained an exact
formula for the fidelity deviation in optimal quantum teleportation
\citep{Badziag-2000} with a two-qubit state. They further showed
that fidelity deviation is nonzero for generic two-qubit states but
there exist states, other than maximally entangled and Werner \citep{Bang-et-al-2018},
with vanishing fidelity deviation. 

While maximal fidelity indicates how well, on average, an input state
is teleported, fidelity deviation is a measure of dispersion. But
considered together, the ordered pair is expected to serve as a better
performance measure, one which is more effective than maximal fidelity
alone \citep{Bang-et-al-2018,Ghosal-deviation-2019}. For example,
consider the problem of selecting the best performing states from
a given set of states having the same maximal fidelity. Now as far
as fidelity is concerned, all states are equally good for quantum
teleportation but note that fidelity deviations of the states would
vary in general. Because we want fidelity deviation to be as small
as possible, the best performing states are clearly those with the
minimum fidelity deviation. Indeed, it has been shown \citep{Ghosal-deviation-2019}
that within a set of states having the same maximal fidelity, larger
than the classical bound, one can always find states with zero fidelity
deviation. Here we take this idea forward and apply it to a more general
setting. 

Specifically, we focus on the problem of characterizing two-qubit
states that are optimal for quantum teleportation for a given value
of a physical (state) property with a well-defined measure, where
the optimal states are defined (see the next subsection) by considering
both maximal fidelity and fidelity deviation. The state properties
considered for our paper are purity, Bell nonlocality, and entanglement,
with the respective measures being linear entropy $L$ \citep{Bose-Vedral-2000,Wei-2003-MEM},
the maximum mean value $B$ of the Bell-CHSH observable \citep{RP_Horodecki-1996},
and concurrence $C$ \citep{Concurrence-1998}. 

It is important to note that, for a given two-qubit state $\rho$,
none of the state properties, $L\left(\rho\right)$, $B\left(\rho\right)$,
and $C\left(\rho\right)$, is a proper quantifier for quantum teleportation,
although each of them can provide meaningful information. For example,
$\rho$ is useful for quantum teleportation (under trace-preserving
LOCC) iff $C\left(\rho\right)>0$ \citep{VV-2003}; if $\rho$ violates
the generalized Bell-CHSH inequality then it is useful for teleportation
\citep{RMP-Horodecki-96} (the converse does not hold); $\rho$ is
not useful if $L\left(\rho\right)\geq L_{0}$ \citep{Bose-Vedral-2000},
where $L_{0}$ is the classical threshold. Thus there appears to be
some connection between the state properties and the ability of a
state to perform a quantum information processing task, such as quantum
teleportation. But to the best of our knowledge, such connections
have not been explored in a comprehensive manner, especially in the
context of optimality of resource states, although very recently \citep{Nandi+-2018},
for two-qubit $X$ states maximal fidelity was studied for a given
purity (mixedness) and concurrence. In the present paper, we will
show that under a very reasonable definition of what constitutes an
optimal state the state parameters (of optimal states) have a definite
functional relationship with the given value of the state property
under consideration. Note that this does not imply that the state
properties actually quantify quantum teleportation, but once we fix
their values they can completely (holds for $L$ and $B$), or almost
completely (holds for $C$), characterize the optimal states. 

\subsection*{Results }

Let us now explain what we mean by optimal states for a given value
of state property, which we assume to have a well-defined measure,
say $P$. For a given value $P=\mathcal{P}$, denote the set of all
two-qubit states by $S\left(\mathcal{P}\right)$. This set is defined
as
\begin{eqnarray}
S\left(\mathcal{P}\right) & = & \left\{ \rho\vert P\left(\rho\right)=\mathcal{P}\right\} .\label{set}
\end{eqnarray}
Now define the largest maximal fidelity 
\begin{eqnarray}
F_{\mathcal{P}} & = & \max_{\rho\in S\left(\mathcal{P}\right)}F_{\rho},\label{LMF}
\end{eqnarray}
where $F_{\rho}$ denotes the maximal fidelity of $\rho$ and the
maximum is taken over all states in $S\left(\mathcal{P}\right)$.
Let us denote the set of states with $F_{\rho}=F_{\mathcal{P}}$ by
$S\left(F_{\mathcal{P}}\vert\mathcal{P}\right)$. Now it might be
the case that for the given value $\mathcal{P}$, $F_{\mathcal{P}}$
does not exceed the classical bound, which is $\frac{2}{3}$ for qubits
\citep{MPR-Horodecki-1999,Badziag-2000,Massar-Popescu-classical-1995}.
Clearly, such states are not useful for quantum teleportation, and
we will therefore disregard any such value or values of $P$ from
our consideration. In fact, we will see that such a situation arises
for linear entropy. 

Without loss of generality let us therefore assume that $F_{\mathcal{P}}>\frac{2}{3}$.
Now from the result proved in \citep{Ghosal-deviation-2019} we know
$S\left(F_{\mathcal{P}}\vert\mathcal{P}\right)$ must always contain
states with zero fidelity deviation, and we call such states optimal.
The optimal states therefore form a subset of $S\left(F_{\mathcal{P}}\vert\mathcal{P}\right)$.
Intuitively, it appears that the optimal set should be a proper subset
of $S\left(F_{\mathcal{P}}\vert\mathcal{P}\right)$, but it turns
out this is not always the case (discussed below). Thus to summarize,
a given state $\rho\in S\left(\mathcal{P}\right)$ is optimal if and
only if $F_{\rho}=F_{\mathcal{P}}$ and $\Delta_{\rho}=0$, where
$\Delta_{\rho}$ denotes the fidelity deviation of $\rho$. Note that
the definition of optimal states is completely general and holds for
any physical property with a well-defined measure. 

The main contribution of this paper lies in complete characterization
of\emph{ }two-qubit states that are optimal for quantum teleportation
for a given $L$, $B$, and $C$, respectively\emph{.} For purity
and Bell nonlocality we identify the largest maximal fidelity states
by solving constrained optimization problems. Interestingly, in both
cases we find that the states with the largest maximal fidelity have
zero fidelity deviation (the converse is not true in general); hence,
they are optimal. In other words, the set of optimal states is identical
to the set of largest maximal fidelity states in these two cases.
For concurrence, however, a similar approach fails to work (explained
later). So here we take a different approach where we make use of
the results in \citep{VV-2-2002-concurrence} to identify the set
of optimal states. In this case we find that, unlike linear entropy
and Bell-CHSH violation, the optimal states for a given concurrence
form a strict subset of the states with the largest maximal fidelity;
that is, within the set of largest maximal fidelity states only some
states have zero fidelity deviation and, therefore, only those states
are optimal. In addition, we provide the necessary and sufficient
conditions a two-qubit state $\rho$ must satisfy so that it is optimal
for quantum teleportation for $\text{\ensuremath{\mathcal{P}}=}P\left(\rho\right)$,
where $P\in\left(L,B,C\right)$. 

For our analysis, we rely on the canonical representation \citep{Badziag-2000,Ghosal-deviation-2019}
of a two-qubit state, which is known to be quite useful in studying
two-qubit nonlocal properties \citep{RMP-Horodecki-96,MPR-Horodecki-1999,Badziag-2000,RP_Horodecki-1996,RM-Horodecki-1996}.
The canonical form is related to the Hilbert-Schmidt representation
\citep{RMP-Horodecki-96,MPR-Horodecki-1999,Badziag-2000,RP_Horodecki-1996,RM-Horodecki-1996}
of a two-qubit state via an appropriate local unitary transformation
\citep{Badziag-2000} and can be described with fewer state parameters.
Note that there is no loss of generality in employing the canonical
form for our studies as the properties that we consider are invariant
under local unitary transformations. 

The rest of the paper is arranged as follows. In Sec.$\,$\ref{prelim}
we review the necessary definitions and concepts we require for this
paper. In particular, we review the Hilbert-Schmidt decomposition
and the canonical form of a two-qubit density matrix, discuss the
notions of maximal fidelity and fidelity deviation, and summarize
the relevant formulas. The main results are derived in Sec.$\,$\ref{main-results},
and we conclude with a short discussion in Sec.$\,$\ref{conclusions}. 

\section{Preliminaries \label{prelim}}

\subsection*{Hilbert-Schmidt representation and the canonical form}

The Hilbert-Schmidt decomposition of a two-qubit density matrix $\rho$
is given by \citep{RMP-Horodecki-96,RP_Horodecki-1996,RM-Horodecki-1996} 

\begin{eqnarray}
\rho & = & \frac{1}{4}\left(I\otimes I+\bm{R}\bm{\cdot\sigma}\otimes I+I\otimes\bm{S}\bm{\cdot\sigma}+\sum_{i,j=1}^{3}T_{ij}\sigma_{i}\otimes\sigma_{j}\right),\label{rho-HSdecomp}
\end{eqnarray}
where $\bm{R}$ and $\bm{S}$ are vectors in $\mathbb{R}^{3}$, $\bm{R}\left(\bm{S}\right)\bm{\cdot}\bm{\sigma}$$=\sum_{i=1}^{3}R_{i}(S_{i})\sigma_{i}$,
and $T_{ij}={\rm Tr}\left(\rho\sigma_{i}\otimes\sigma_{j}\right)$,
where $i,j=1,2,3$, are elements of a real $3\times3$ matrix $T$
(the correlation matrix). 

Let $t_{11},t_{22},t_{33}$ be the eigenvalues of $T$. One can show
\citep{MPR-Horodecki-1999,RP_Horodecki-1996,RM-Horodecki-1996} that
there always exists a product unitary operator $U_{1}\otimes U_{2}$
that transforms $\rho\rightarrow\varrho$, where $\varrho$ is $T$
diagonal. In particular, 
\begin{eqnarray}
\varrho & = & \frac{1}{4}\left(I\otimes I+\bm{r\cdot\sigma}\otimes I+I\otimes\bm{s\cdot\sigma}+\sum_{i=1}^{3}\lambda_{i}\left|t_{ii}\right|\sigma_{i}\otimes\sigma_{i}\right),\label{canonical-rho}
\end{eqnarray}
where $\lambda_{i}\in\left\{ -1,+1\right\} $ and
\begin{eqnarray*}
\bm{r} & = & O_{1}\bm{R},\\
\bm{s} & = & O_{2}\bm{S},\\
T_{\varrho} & = & O_{1}T_{\rho}O_{2}^{\dagger},
\end{eqnarray*}
for unique $3\times3$ rotation matrices $O_{1}$ and $O_{2}$ obtained
via 
\begin{eqnarray}
U_{i}\bm{n\cdot\sigma}U_{i}^{\dagger} & = & \left(O_{i}^{\dagger}\bm{n}\right)\bm{\cdot\sigma}\;\;i=1,2.\label{U-O-relation}
\end{eqnarray}
One can further choose $U_{1}$ and $U_{2}$ such that (a) if $\det T\leq0$
then $\lambda_{i}=-1$ for $\left|t_{ii}\right|\neq0$, $i=1,2,3$;
(b) if $\det T>0$ then $\lambda_{i},\lambda_{j}=-1$, $\lambda_{k}=+1$
for any choice of $i\neq j\neq k\in\left\{ 1,2,3\right\} $ satisfying
$\left|t_{ii}\right|\geq\left|t_{jj}\right|\geq\left|t_{kk}\right|$.
The transformed state $\varrho$ is defined as the canonical form
of $\rho$ \citep{Badziag-2000,Ghosal-deviation-2019} \footnote{This definition, taken from \citep{Ghosal-deviation-2019}, differs
from that of \citep{Badziag-2000}. But the difference is only cosmetic
as both representations are related to each other by local unitary
operations. We prefer our representation, however (see \citep{Ghosal-deviation-2019})!}. 

\subsection*{Teleportation fidelity and fidelity deviation}

The average teleportation fidelity (average fidelity) for a two-qubit
state $\rho$ is defined as \citep{RMP-Horodecki-96} 
\begin{eqnarray}
\left\langle f_{\rho}\right\rangle  & = & \int f_{\psi,\rho}{\rm d}\psi,\label{teleportationl-fidelity}
\end{eqnarray}
where $f_{\psi,\rho}=\left\langle \psi\left|\varsigma\right|\psi\right\rangle $
is the fidelity between an input-output pair $\left(\left|\psi\rangle\langle\psi\right|,\varsigma\right)$
and the integral is over a uniform distribution ${\rm d}\psi$ (normalized
Haar measure, $\int{\rm d}\psi=1$) of input states $\psi=\left|\psi\right\rangle \left\langle \psi\right|$.
Unless stated otherwise, the average fidelity is computed with respect
to the standard protocol \citep{teleportation-93}. Note that, $\frac{2}{3}\leq\left\langle f_{\rho}\right\rangle \leq1$
, where the equality $\left\langle f_{\rho}\right\rangle =1$ holds
if and only if $\rho$ is maximally entangled. 

Fidelity deviation, which is a measure of fidelity fluctuations, is
defined as the standard deviation of fidelity over all input states
\citep{Bang-et-al-2012,Bang-et-al-2018,Ghosal-deviation-2019}: 
\begin{eqnarray}
\delta_{\rho} & = & \sqrt{\left\langle f_{\rho}^{2}\right\rangle -\left\langle f_{\rho}\right\rangle ^{2}},\label{fidelity-standard-deviation}
\end{eqnarray}
where $\left\langle f_{\rho}^{2}\right\rangle =\int f_{\psi,\rho}^{2}{\rm d}\psi$.
Note that, 
\begin{eqnarray*}
\delta_{\rho}^{2} & \leq & \left\langle f_{\rho}\right\rangle -\left\langle f_{\rho}\right\rangle ^{2}=\left\langle f_{\rho}\right\rangle \left(1-\left\langle f_{\rho}\right\rangle \right)\leq\frac{1}{4}.
\end{eqnarray*}
Thus $0\leq\delta_{\rho}\leq\frac{1}{2}$, where $\delta_{\rho}=0$
iff $f_{\psi,\rho}=\left\langle f_{\rho}\right\rangle $ for all $\left|\psi\right\rangle $. 

\subsubsection*{Maximal fidelity and fidelity deviation }

For a given two-qubit state $\rho$ the maximal fidelity $F_{\rho}$
is defined as the maximal value of the average fidelity obtained over
all strategies within the standard protocol and local unitary operations
\citep{RMP-Horodecki-96,Badziag-2000}. It can be shown that \citep{Badziag-2000}
\begin{eqnarray*}
F_{\rho} & = & F_{\varrho},\\
 & = & \left\langle f_{\varrho}\right\rangle ,
\end{eqnarray*}
where $\varrho$ is the canonical representative of $\rho$. The equality
$F_{\rho}=\left\langle f_{\varrho}\right\rangle $ indicates an optimal
strategy that consists of two steps: first, transform $\rho\rightarrow\varrho$
using an appropriate local unitary operation, and then use $\varrho$
for quantum teleportation following the standard protocol \citep{Badziag-2000}.

The fidelity deviation corresponding to the optimal protocol mentioned
above is defined as $\Delta_{\rho}=\delta_{\varrho}$ \citep{Ghosal-deviation-2019}.
In general, the minimum of $\delta_{\rho}$, where the minimum is
taken over all local unitary strategies, is not the same as $\Delta_{\rho}$.
While it is possible to minimize $\delta_{\rho}$ over all local unitary
strategies, such strategies might not always achieve the maximal value
$F_{\rho}$. So, purely for physical reasons, the definition is an
appropriate one (for details, see \citep{Ghosal-deviation-2019}). 

The states with vanishing fidelity deviation are of special interest.
Such states are said to satisfy the universality condition \citep{Bang-et-al-2012,Bang-et-al-2018,Ghosal-deviation-2019}:
the condition that all input states are teleported equally well. For
example, maximally entangled states and Werner states \citep{Bang-et-al-2018}
have this property but there also exist other states that are universal
(for a detailed discussion on the universality condition and other
examples see \citep{Ghosal-deviation-2019}). 
\begin{defn*}
A two-qubit state $\rho$ is useful for quantum teleportation iff
$F_{\rho}>\frac{2}{3}$ \citep{RMP-Horodecki-96,MPR-Horodecki-1999}
and universal iff $\Delta_{\rho}=0$ \citep{Ghosal-deviation-2019}. 
\end{defn*}
The useful and universal conditions can hold independent of each other.
In particular, a state can be useful but not universal, and vice versa.
The necessary and sufficient condition for a two-qubit state to be
both useful and universal is given in \citep{Ghosal-deviation-2019}. 

The results in \citep{RMP-Horodecki-96,Badziag-2000,Ghosal-deviation-2019}
have established that both $F_{\rho}$ and $\Delta_{\rho}$ are functions
of the eigenvalues of the $T$ matrix. The table below summarizes
the formulas and the conditions under which they hold \citep{Badziag-2000,Ghosal-deviation-2019}: 
\begin{center}
{\small{}}%
\begin{tabular}{|c|c|c|}
\hline 
{\small{}$\rho$} & {\small{}$F_{\rho}$} & {\small{}$\Delta_{\rho}$}\tabularnewline
\hline 
\hline 
{\small{}det$T<0$} & {\small{}$\frac{1}{2}\left(1+\frac{1}{3}\stackrel[i=1]{3}{\sum}\left|t_{ii}\right|\right)>\frac{2}{3}\;$iff$\,\stackrel[i=1]{3}{\sum}\left|t_{ii}\right|>1$} & {\small{}$\frac{1}{3\sqrt{10}}\sqrt{\stackrel[i<j=1]{3}{\sum}\left(\left|t_{ii}\right|-\left|t_{jj}\right|\right)^{2}}$}\tabularnewline
\hline 
{\small{}det$T=0$} & {\small{}$\frac{1}{2}\left(1+\frac{1}{3}\stackrel[i=1]{3}{\sum}\left|t_{ii}\right|\right)\leq\frac{2}{3}$} & {\small{}$\frac{1}{3\sqrt{10}}\sqrt{\stackrel[i<j=1]{3}{\sum}\left(\left|t_{ii}\right|-\left|t_{jj}\right|\right)^{2}}$}\tabularnewline
\hline 
{\small{}det$T>0$} & {\small{}$\frac{1}{2}\left[1+\frac{1}{3}\underset{i\neq j\neq k}{\max}\left(\left|t_{ii}\right|+\left|t_{jj}\right|-\left|t_{kk}\right|\right)\right]\leq\frac{2}{3}$} & {\small{}$\underset{i\neq j\neq k}{\min}$$\frac{1}{3\sqrt{10}}\sqrt{\left(\left|t_{ii}\right|-\left|t_{jj}\right|\right)^{2}+\left(\left|t_{ii}\right|+\left|t_{kk}\right|\right)^{2}+\left(\left|t_{jj}\right|+\left|t_{kk}\right|\right)^{2}}$}\tabularnewline
\hline 
\end{tabular}{\small{}}\\
\par\end{center}

Let us briefly summarize the important points: 
\begin{itemize}
\item A two-qubit state is useful for quantum teleportation iff $\sum_{i=1}^{3}\left|t_{ii}\right|>1$
\citep{RMP-Horodecki-96}; if a state is useful then it also has the
property $\det T<0$, but the converse is not true in general, for
example, there exist entangled states with $\det T<0$ but for which
$\sum_{i=1}^{3}\left|t_{ii}\right|\leq1$ \citep{Ghosal-deviation-2019}. 
\item Not all entangled two-qubit states are useful in the sense that there
exist entangled states for which $\sum_{i=1}^{3}\left|t_{ii}\right|\leq1$.
But for all such states one can apply suitable trace-preserving LOCC
to make them useful \citep{VV-2003,Badziag-2000,SB-2002}. 
\end{itemize}

\section{Results \label{main-results}}

Recall that for a given value of a state property the optimal states
are those with the largest maximal fidelity and zero fidelity deviation;
as explained earlier, we shall disregard the values of the state property
for which the largest maximal fidelity does not exceed the classical
bound. 

From the table we see that the states with $\det T\geq0$ are not
useful because $F_{\rho}\leq\frac{2}{3}$. Thus it suffices to focus
only on states with $\det T<0$, and for such states the maximal fidelity
and the fidelity deviation are given by 
\begin{eqnarray}
F_{\rho} & = & \frac{1}{2}\left(1+\frac{1}{3}\sum_{i=1}^{3}\left|t_{ii}\right|\right),\label{F-rho}\\
\Delta_{\rho} & = & \frac{1}{3\sqrt{10}}\sqrt{\sum_{i<j=1}^{3}\left(\left|t_{ii}\right|-\left|t_{jj}\right|\right)^{2}.}\label{Delta-rho}
\end{eqnarray}
It holds that $\Delta_{\rho}=0$ iff $\left|t_{ii}\right|$, $i=1,2,3$
are all equal \citep{Ghosal-deviation-2019}. Note that, $\det T<0$
implies $t_{ii}\neq0$ for all $i=1,2,3$.

As we explained in the introduction, our analysis will be carried
out for a canonical $\varrho$. For states with $\det T<0$ the canonical
$\varrho$ is given by 
\begin{eqnarray}
\varrho & = & \frac{1}{4}\left(I\otimes I+\bm{r\cdot\sigma}\otimes I+I\otimes\bm{s\cdot\sigma}-\sum_{i=1}^{3}\left|t_{ii}\right|\sigma_{i}\otimes\sigma_{i}\right).\label{canonical-rho-detT<0}
\end{eqnarray}

\subsection{Optimal two-qubit states for a given linear entropy }

The purity of a state can be measured by the linear entropy $L$,
which is a linear approximation of the von Neumann entropy. The normalized
linear entropy for a two-qubit state $\rho$ is defined as \citep{Bose-Vedral-2000,Wei-2003-MEM}
\begin{eqnarray}
L\left(\rho\right) & = & \frac{4}{3}\left(1-\text{Tr}\rho^{2}\right),\label{Linear-entropy}
\end{eqnarray}
which is zero for pure states and one for the maximally mixed state.
Since $L\left(\rho\right)=L\left(\varrho\right)$, one obtains 
\begin{eqnarray}
L\left(\varrho\right) & = & 1-\frac{1}{3}\sum_{i=1}^{3}\left(r_{i}^{2}+s_{i}^{2}+\left|t_{ii}\right|^{2}\right),\label{linear-entropy-expression}
\end{eqnarray}
for $\varrho$ given by (\ref{canonical-rho-detT<0}). 

The goal is to find the optimal states for a given linear entropy
$L=\mathcal{L}$. The first step therefore is to identify the states
with the largest maximal fidelity. This requires us to maximize $F_{\varrho}$
as given by (\ref{F-rho}) for fixed $\mathcal{L}$. The constrained
optimization problem can be stated as
\begin{eqnarray}
{\rm maximize} & \;\;\; & \sum_{i=1}^{3}\left|t_{ii}\right|\label{N=00005Crho}\\
{\rm such\,}{\rm that} & \;\;\; & 1-\frac{1}{3}\sum_{i=1}^{3}\left(r_{i}^{2}+s_{i}^{2}+\left|t_{ii}\right|^{2}\right)=\mathcal{L}.\label{linearentropyconstraint}
\end{eqnarray}
One immediately notices that the states that maximize $\sum_{i=1}^{3}\left|t_{ii}\right|$
must have $r_{i}=s_{i}=0$ for all $i=1,2,3$ -- this condition,
however, is only necessary and not sufficient. Note, however, that
the conditions $r_{i}=s_{i}=0$ for $i=1,2,3$ imply that the local
vectors $\bm{R},\bm{S}$ associated with $\rho$ must also be zero. 

Now setting $r_{i}=s_{i}=0$, $i=1,2,3,$ (\ref{N=00005Crho}) and
(\ref{linearentropyconstraint}) become 
\begin{eqnarray}
{\rm maximize} & \;\;\; & \sum_{i=1}^{3}\left|t_{ii}\right|\label{N=00005Crho-1}\\
{\rm such\,}{\rm that} & \;\;\; & \sum_{i=1}^{3}\left|t_{ii}\right|^{2}=3\left(1-\mathcal{L}\right).\label{linearentropyconstraint-1}
\end{eqnarray}
This can be easily solved. First, we parametrize $\left|t_{ii}\right|$
for $i=1,2,3$ as
\begin{eqnarray*}
\left|t_{11}\right| & = & A\sin\theta\cos\phi,\\
\left|t_{22}\right| & = & A\sin\theta\sin\phi,\\
\left|t_{33}\right| & = & A\cos\theta,
\end{eqnarray*}
where $A=\sqrt{3\left(1-\mathcal{L}\right)}$ is constant and $\theta\in\left(0,\frac{\pi}{2}\right)$,
$\phi\in\left(0,\frac{\pi}{2}\right)$ or $\theta\in\left(\frac{3\pi}{2},2\pi\right)$,
$\phi\in\left(\pi,\frac{3\pi}{2}\right)$. With this parametrization
the problem reduces to finding the maxima of the function 
\begin{eqnarray*}
f\left(\theta,\phi\right) & = & A\left(\sin\theta\cos\phi+\sin\theta\sin\phi+\cos\theta\right)
\end{eqnarray*}
within the acceptable ranges of $\theta$ and $\phi$ as specified
before. A simple calculation shows the maxima are obtained at two
critical points: $\theta^{*}=\tan^{-1}\sqrt{2}$, $\phi^{*}=\frac{\pi}{4}$,
and $\theta^{*}=2\pi-\tan^{-1}\sqrt{2}$, $\phi^{*}=\frac{5\pi}{4}$,
and at both of them 
\begin{eqnarray}
\left|t_{ii}\right| & = & \sqrt{1-\mathcal{L}},\;i=1,2,3.\label{opt-t_ii-LE}
\end{eqnarray}
Thus the largest maximal fidelity $F_{\mathcal{L}}$ for given linear
entropy $\mathcal{L}$ is 
\begin{eqnarray}
F_{\mathcal{L}} & = & \frac{1}{2}\left(1+\sqrt{1-\mathcal{L}}\right).\label{F-opt-LE-1}
\end{eqnarray}
The states that achieve this maximal value have $\bm{R}=\bm{S}=0$
and $\left|t_{ii}\right|=\sqrt{1-\mathcal{L}}$, $i=1,2,3$. Since
$\left|t_{ii}\right|$ are all equal, the largest maximal fidelity
states have zero fidelity deviation. 

Now, what remains to be checked is whether for all values of $\mathcal{L}$,
$0\leq\mathcal{L}\leq1$ the largest maximal fidelity states are useful,
i.e., $F_{\mathcal{L}}>\frac{2}{3}$. It turns out that they are not
\citep{Bose-Vedral-2000}. The condition $F_{\mathcal{L}}>\frac{2}{3}$
holds only when $0\leq\mathcal{L}<\frac{8}{9}$. In other words, not
all values of the linear entropy are permissible. 

We can now summarize the results. 
\begin{prop}
For a given linear entropy $\mathcal{L}$, where $0\leq\mathcal{L}<\frac{8}{9}$,
the optimal two-qubit states for quantum teleportation are those with
local vectors $\bm{R}=\bm{S}=0$ and $\left|t_{ii}\right|=\sqrt{1-\mathcal{L}}$,
$i=1,2,3$. 
\end{prop}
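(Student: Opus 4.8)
The plan is to prove the claimed equality of sets (optimal states $=$ states with $\bm{R}=\bm{S}=0$, $|t_{ii}|=\sqrt{1-\mathcal{L}}$) by working entirely in the canonical form, which is legitimate because $L$, $F_\rho$, and $\Delta_\rho$ are all local-unitary invariants. First I would reduce the search space: the table shows $F_\rho\le\frac{2}{3}$ whenever $\det T\ge 0$, so no such state is useful and every candidate for an optimal state must have $\det T<0$. On this regime I may use the closed forms (\ref{F-rho}) for $F_\rho$, (\ref{Delta-rho}) for $\Delta_\rho$, and (\ref{linear-entropy-expression}) for $L$ throughout. The problem then splits into two parts: (i) identifying the largest maximal fidelity $F_{\mathcal{L}}$ and \emph{all} states attaining it, and (ii) checking that exactly these states are universal.

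For part (i), since $F_\varrho$ is strictly increasing in $\sum_i|t_{ii}|$ by (\ref{F-rho}), maximizing fidelity at fixed $L=\mathcal{L}$ is equivalent to maximizing $\sum_i|t_{ii}|$ under the constraint $\sum_i(r_i^2+s_i^2+|t_{ii}|^2)=3(1-\mathcal{L})$. I would first show a maximizer must have $r_i=s_i=0$: any nonzero local-vector component strictly shrinks the budget $\sum_i|t_{ii}|^2$ available to the correlations and hence strictly lowers the attainable $\sum_i|t_{ii}|$ by the bound below. With $r_i=s_i=0$ the constraint becomes $\sum_i|t_{ii}|^2=3(1-\mathcal{L})$, and Cauchy--Schwarz gives $\sum_i|t_{ii}|\le\sqrt{3\sum_i|t_{ii}|^2}=3\sqrt{1-\mathcal{L}}$, with equality \emph{iff} all three $|t_{ii}|$ are equal; the constraint then forces the common value $\sqrt{1-\mathcal{L}}$ and yields $F_{\mathcal{L}}=\frac{1}{2}\bigl(1+\sqrt{1-\mathcal{L}}\bigr)$, matching (\ref{F-opt-LE-1}). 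I prefer the Cauchy--Schwarz route to the explicit spherical parametrization used above precisely because its equality condition supplies uniqueness for free, which is what a \emph{complete} characterization demands.

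For part (ii), the equality case just found makes all $|t_{ii}|$ equal, so (\ref{Delta-rho}) gives $\Delta_\varrho=0$ at once; thus the largest maximal fidelity states are automatically universal and hence optimal. Conversely, every optimal state lies in $S(F_{\mathcal{L}}\vert\mathcal{L})$, so by the uniqueness from the Cauchy--Schwarz equality it must have $\bm{R}=\bm{S}=0$ and $|t_{ii}|=\sqrt{1-\mathcal{L}}$; this closes the ``only if'' direction and shows the two sets coincide. Two consistency checks complete the statement: realizability with $\det T<0$, where the canonical sign choice $t_{ii}=-\sqrt{1-\mathcal{L}}$ gives $\det T=-(1-\mathcal{L})^{3/2}<0$ and the resulting operator is a legitimate Werner state; and usefulness $F_{\mathcal{L}}>\frac{2}{3}$, which reduces to $\sqrt{1-\mathcal{L}}>\frac{1}{3}$, i.e. $\mathcal{L}<\frac{8}{9}$, pinning down the admissible range.

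The hard part will not be the optimization itself but the \emph{completeness} of the characterization, namely being certain no optimal state is overlooked and that the extremum is feasible as a genuine density operator. The necessity of $r_i=s_i=0$ together with the uniqueness of the equal-$|t_{ii}|$ point are the steps that secure the converse inclusion, while the explicit Werner-state realization rules out the extremum lying outside the physical state space. I would also record that discarding $\det T\ge0$ costs nothing, since those regimes cannot beat the classical bound by the table, so the maximum over all useful states is indeed attained in the $\det T<0$ family.
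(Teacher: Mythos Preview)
Your proposal is correct and follows essentially the same overall structure as the paper: reduce to $\det T<0$, observe that any maximizer must have $r_i=s_i=0$, maximize $\sum_i|t_{ii}|$ on the sphere $\sum_i|t_{ii}|^2=3(1-\mathcal{L})$, read off $\Delta_\rho=0$ from the equality of the $|t_{ii}|$, and check $F_{\mathcal{L}}>\tfrac{2}{3}$ iff $\mathcal{L}<\tfrac{8}{9}$.

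The one genuine difference is the tool you use for the constrained maximization. The paper parametrizes $(|t_{11}|,|t_{22}|,|t_{33}|)$ in spherical coordinates and locates the critical points of $f(\theta,\phi)=A(\sin\theta\cos\phi+\sin\theta\sin\phi+\cos\theta)$ by calculus, arriving at $|t_{ii}|=\sqrt{1-\mathcal{L}}$. You instead apply Cauchy--Schwarz, $\sum_i|t_{ii}|\le\sqrt{3\sum_i|t_{ii}|^2}$, and use its equality case. This is more elementary and, as you note, delivers uniqueness of the maximizer automatically, which is exactly what the ``complete characterization'' claim requires; the paper's critical-point calculation establishes the maximum but does not explicitly argue uniqueness over the full admissible domain. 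Your additional remarks on the converse inclusion and on feasibility (the Werner-state realization) are also more explicit than the paper's treatment, which takes these points for granted. Either route is fine; yours is a little tighter.
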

The Hilbert-Schmidt decomposition of an optimal state $\rho_{{\rm opt}}$
is given by 
\begin{eqnarray}
\rho_{{\rm opt}} & = & \frac{1}{4}\left(I\otimes I+\sum_{i,j=1}^{3}T_{ij}\sigma_{i}\otimes\sigma_{j}\right),\label{rho-HSdecomp-1}
\end{eqnarray}
where the eigenvalues $t_{ii}$ of $T$ are such that $\left|t_{ii}\right|=\sqrt{1-\mathcal{L}}$,
$i=1,2,3$. This is explicitly reflected in the canonical form 
\begin{eqnarray}
\varrho_{{\rm opt}} & = & \frac{1}{4}\left(I\otimes I-\sqrt{1-\mathcal{L}}\sum_{i=1}^{3}\sigma_{i}\otimes\sigma_{i}\right).\label{opt-rho-LE}
\end{eqnarray}
So far, our analysis has been completely general. Now we consider
a state-specific question: Given a two-qubit state $\rho$ with linear
entropy $L\left(\rho\right)$, is $\rho$ optimal for $L\left(\rho\right)=\mathcal{L}$?
The following proposition answers this question. 
\begin{prop}
Let $\rho$ be a two-qubit state of linear entropy $\mathcal{L}$,
where $0\leq\mathcal{L}<\frac{8}{9}$. Then $F_{\rho}=F_{\mathcal{L}}$
and $\Delta_{\rho}=0$ if and only if $\left|t_{ii}\right|=\sqrt{1-\mathcal{L}}$
for $i=1,2,3$. 
\end{prop}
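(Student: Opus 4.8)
The plan is to prove the two implications separately, in both cases reducing matters to the closed forms (\ref{F-rho}) and (\ref{Delta-rho}), which are only licensed once $\det T<0$ has been secured. For the ``only if'' part I would begin from $F_\rho=F_\mathcal{L}$ together with $\Delta_\rho=0$. Because $0\le\mathcal{L}<\frac89$, formula (\ref{F-opt-LE-1}) gives $F_\mathcal{L}=\frac12\left(1+\sqrt{1-\mathcal{L}}\right)>\frac23$, so $F_\rho>\frac23$; by the table this is possible only when $\det T<0$, which is exactly what is needed to invoke (\ref{F-rho}) and (\ref{Delta-rho}). Then $\Delta_\rho=0$ forces $\left|t_{11}\right|=\left|t_{22}\right|=\left|t_{33}\right|=:t$, so (\ref{F-rho}) reads $F_\rho=\frac12(1+t)$, and equating this with $F_\mathcal{L}$ yields $t=\sqrt{1-\mathcal{L}}$. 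This direction is purely algebraic once $\det T<0$ is in hand.

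The ``if'' part carries the real content, and I would handle it in two steps. Assuming $\left|t_{ii}\right|=\sqrt{1-\mathcal{L}}$ for every $i$, the first step substitutes $\sum_i\left|t_{ii}\right|^2=3(1-\mathcal{L})$ into the linear-entropy identity (\ref{linear-entropy-expression}) under the constraint $L(\varrho)=\mathcal{L}$; the constant pieces cancel and one is left with $\sum_i(r_i^2+s_i^2)=0$, hence $r_i=s_i=0$ for all $i$ and $\bm R=\bm S=0$. So $\varrho$ is forced to be Bell-diagonal, $\varrho=\frac14\left(I\otimes I+\sum_i\lambda_i\sqrt{1-\mathcal{L}}\,\sigma_i\otimes\sigma_i\right)$ with $\lambda_i\in\{-1,+1\}$ the signs of the eigenvalues $t_{ii}$.

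The second step, establishing $\det T<0$, is where I expect the main obstacle, since the signs $\lambda_i$ are a priori unconstrained and the $\det T>0$ row of the table would give $F_\rho\le\frac23<F_\mathcal{L}$, wrecking the equivalence. I would resolve this through positivity of $\varrho$: its four Bell-basis eigenvalues have the form $\frac14\bigl(1+\sqrt{1-\mathcal{L}}\sum_i\epsilon_i\lambda_i\bigr)$ over the Bell sign patterns $\epsilon$, and a direct check shows that every $\lambda$-pattern with an even number of minus signs (that is, $\det T>0$) realizes the eigenvalue $\frac14\bigl(1-3\sqrt{1-\mathcal{L}}\bigr)$, which is negative precisely because $\mathcal{L}<\frac89$ forces $\sqrt{1-\mathcal{L}}>\frac13$. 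Positivity therefore excludes all such patterns, leaving only the odd-parity ones, for which $\det T=t_{11}t_{22}t_{33}<0$. This is exactly the point at which the hypothesis $\mathcal{L}<\frac89$ is indispensable.

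With $\det T<0$ secured, I would close the argument by evaluating (\ref{F-rho}) and (\ref{Delta-rho}) on the common value $\left|t_{ii}\right|=\sqrt{1-\mathcal{L}}$, which returns $F_\rho=\frac12\left(1+\sqrt{1-\mathcal{L}}\right)=F_\mathcal{L}$ and $\Delta_\rho=0$. Equivalently, once $\bm R=\bm S=0$ and $\det T<0$ are in place the state matches the characterization of Proposition~1 and is optimal by that result; I would nonetheless spell out the positivity step explicitly, as it is the one genuinely nontrivial ingredient beyond Proposition~1.
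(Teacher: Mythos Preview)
Your argument is correct, and its skeleton matches the paper's: in both directions one equates (\ref{F-rho}) with (\ref{F-opt-LE-1}) and reads $\Delta_\rho=0$ from (\ref{Delta-rho}). The paper's proof, however, is much terser: it simply invokes (\ref{F-rho}) and (\ref{Delta-rho}) without pausing to verify $\det T<0$, and in the ``if'' direction it does not derive $\bm r=\bm s=0$ at all. You supply both missing pieces. In the ``only if'' direction you first secure $\det T<0$ from $F_\rho=F_{\mathcal L}>\tfrac23$ before touching (\ref{F-rho}); in the ``if'' direction you force $\bm r=\bm s=0$ via (\ref{linear-entropy-expression}) and then use positivity of the resulting Bell-diagonal state to exclude the even-parity sign patterns, which is precisely where the hypothesis $\mathcal L<\tfrac89$ bites (otherwise $\tfrac14(1-3\sqrt{1-\mathcal L})$ would be nonnegative and nothing would be ruled out). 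So the two proofs agree in spirit, but yours is strictly more complete; the paper's version leans tacitly on the blanket restriction to $\det T<0$ announced at the start of Section~\ref{main-results}, which is not part of the hypotheses of the proposition as stated.
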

\begin{proof}
If $\left|t_{ii}\right|=\sqrt{1-\mathcal{L}}$, $i=1,2,3$, then from
(\ref{F-rho}) we get $F_{\rho}=\frac{1}{2}\left(1+\sqrt{1-\mathcal{L}}\right)$,
which is indeed the largest maximal fidelity $F_{\mathcal{L}}$ for
a given $\mathcal{L}$, and, moreover, $\left|t_{ii}\right|$ are
all equal, hence, $\Delta_{\rho}=0$. On the other hand, if $F_{\rho}=F_{\mathcal{L}}$
then from Eqns.$\,$(\ref{F-rho}) and (\ref{F-opt-LE-1}) we find
that 
\begin{eqnarray*}
\sum_{i=1}^{3}\left|t_{ii}\right| & = & 3\sqrt{1-\mathcal{L}}.
\end{eqnarray*}
Now we impose the condition $\Delta_{\rho}=0$. This implies $\left|t_{ii}\right|$,
$i=1,2,3$ are all equal. Then from the above equation we get $\left|t_{ii}\right|=\sqrt{1-\mathcal{L}}$,
$i=1,2,3$. This completes the proof. 
\end{proof}

\subsection{Optimal Bell-nonlocal states }

The violation of the Bell-CHSH inequality indicates the presence of
Bell-nonlocality. Now whether a two-qubit state $\rho$ violates the
Bell-CHSH inequality or not is completely determined by the function
$M\left(\rho\right)$ = $\max_{i>j}\left(t_{ii}^{2}+t_{jj}^{2}\right)$
\citep{RP_Horodecki-1996}. 
\begin{prop}
\citep{RP_Horodecki-1996} A two-qubit state $\rho$ violates the
Bell-CHSH inequality if and only if $M\left(\rho\right)>1$. 
\end{prop}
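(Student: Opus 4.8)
The plan is to reduce the yes/no question of Bell-CHSH violation to a single maximization of the mean value of the Bell-CHSH operator over measurement settings, and then to carry out that maximization in stages, the final one being an eigenvalue problem for the matrix $U=T^{T}T$. I would start by writing the Bell-CHSH operator as $\mathcal{B}=\bm{a}\cdot\bm{\sigma}\otimes(\bm{b}+\bm{b}')\cdot\bm{\sigma}+\bm{a}'\cdot\bm{\sigma}\otimes(\bm{b}-\bm{b}')\cdot\bm{\sigma}$, where $\bm{a},\bm{a}',\bm{b},\bm{b}'$ range over unit vectors in $\mathbb{R}^{3}$; a state is Bell-nonlocal precisely when $\max|\mathrm{Tr}(\rho\mathcal{B})|>2$. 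Using the Hilbert-Schmidt decomposition (\ref{rho-HSdecomp}) together with $\mathrm{Tr}(\rho\,\sigma_{i}\otimes\sigma_{j})=T_{ij}$, the local terms $\bm{R}\cdot\bm{\sigma}\otimes I$ and $I\otimes\bm{S}\cdot\bm{\sigma}$ contribute nothing, since $\mathcal{B}$ is built solely from products $\sigma_{i}\otimes\sigma_{j}$. This leaves the clean bilinear form $\langle\mathcal{B}\rangle_{\rho}=\bm{a}^{T}T(\bm{b}+\bm{b}')+\bm{a}'^{T}T(\bm{b}-\bm{b}')$, so only the correlation matrix $T$ matters.

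Next I would maximize in stages. Maximizing over $\bm{a},\bm{a}'$ first is immediate: for any fixed $\bm{v}$ one has $\max_{|\bm{a}|=1}\bm{a}^{T}\bm{v}=|\bm{v}|$, so the maximum over $\bm{a},\bm{a}'$ equals $|T(\bm{b}+\bm{b}')|+|T(\bm{b}-\bm{b}')|$. Because $\bm{b}$ and $\bm{b}'$ are unit vectors, $\bm{b}+\bm{b}'$ and $\bm{b}-\bm{b}'$ are orthogonal and obey $|\bm{b}+\bm{b}'|^{2}+|\bm{b}-\bm{b}'|^{2}=4$; I would therefore write $\bm{b}+\bm{b}'=2\cos\theta\,\hat{\bm{c}}$ and $\bm{b}-\bm{b}'=2\sin\theta\,\hat{\bm{d}}$ with $\hat{\bm{c}}\perp\hat{\bm{d}}$ orthonormal and $\theta\in[0,\pi/2]$. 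The objective becomes $2(\cos\theta\,|T\hat{\bm{c}}|+\sin\theta\,|T\hat{\bm{d}}|)$, whose maximum over $\theta$ is $2\sqrt{|T\hat{\bm{c}}|^{2}+|T\hat{\bm{d}}|^{2}}$.

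The remaining task is the eigenvalue step. With $U=T^{T}T$ one has $|T\hat{\bm{c}}|^{2}+|T\hat{\bm{d}}|^{2}=\hat{\bm{c}}^{T}U\hat{\bm{c}}+\hat{\bm{d}}^{T}U\hat{\bm{d}}$, so I must maximize this sum of two Rayleigh quotients over orthonormal pairs $\{\hat{\bm{c}},\hat{\bm{d}}\}$ in $\mathbb{R}^{3}$; the maximum equals the sum $u_{1}+u_{2}$ of the two largest eigenvalues of $U$, attained by choosing the pair to span the corresponding eigenspace. Since the local rotations $O_{1},O_{2}$ bring $T$ to diagonal form with entries $|t_{ii}|$, the eigenvalues of $U$ are $t_{11}^{2},t_{22}^{2},t_{33}^{2}$, whence $u_{1}+u_{2}=\max_{i>j}(t_{ii}^{2}+t_{jj}^{2})=M(\rho)$. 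Assembling the stages gives $\max|\langle\mathcal{B}\rangle_{\rho}|=2\sqrt{M(\rho)}$, and this exceeds $2$ if and only if $M(\rho)>1$, which is the claim.

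The step I expect to demand the most care is this last optimization, namely that $\max\big(\hat{\bm{c}}^{T}U\hat{\bm{c}}+\hat{\bm{d}}^{T}U\hat{\bm{d}}\big)$ over orthonormal pairs equals the sum of the two largest eigenvalues of $U$. This is an instance of Ky Fan's maximum principle; in $\mathbb{R}^{3}$ it can be verified directly by diagonalizing $U$ and by a Lagrange-multiplier argument, but one must be careful to confirm that the supremum is genuinely \emph{attained} at the top eigenvectors rather than merely bounded above by $u_{1}+u_{2}$. Everything else is routine, and in fact the result can simply be quoted from \citep{RP_Horodecki-1996}.
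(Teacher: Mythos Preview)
Your argument is correct and is precisely the standard derivation from \citep{RP_Horodecki-1996}; the paper itself does not prove this proposition but simply quotes it as a known result, so there is no independent proof in the text to compare against. The staged optimization you outline (first over $\bm{a},\bm{a}'$, then over $\theta$, then the Ky Fan step over orthonormal pairs) is exactly the original Horodecki route, and your identification of the potentially delicate step is apt but easily handled in $\mathbb{R}^{3}$ as you note.
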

The function $M\left(\rho\right)$ is related to the maximal mean
value $B$ of the Bell-CHSH observable via the relation $B=2\sqrt{M\left(\rho\right)}$
\citep{RP_Horodecki-1996}; hence, $M\left(\rho\right)>1$ implies
$B>2$ -- the condition for Bell-CHSH violation. 

For a canonical $\varrho$ with $\det T<0$, the eigenvalues of the
$T$ matrix are given by $-\left|t_{ii}\right|$, $i=1,2,3$. Therefore,
\begin{eqnarray*}
M\left(\varrho\right) & = & \max_{i>j}\left(\left|t_{ii}\right|^{2}+\left|t_{jj}\right|^{2}\right),\\
 & = & M\left(\rho\right).
\end{eqnarray*}
We want to identify the optimal Bell-nonlocal states for a given value
of $B$, say $\mathcal{B}$, where $\mathcal{B}>2$. Thus for a given
$\mathcal{B}$, the corresponding $M$ is fixed; let us denote this
fixed value by $\mathcal{M}$, where $\mathcal{M}>1$. Since the relation
$\left|t_{ii}\right|\geq\left|t_{jj}\right|\geq\left|t_{kk}\right|$
always holds for some choice of $i,j,k\in\left\{ 1,2,3\right\} $,
$i\neq j\neq k$, the constrained optimization problem can be stated
as 
\begin{eqnarray}
{\rm maximize} & \;\;\; & \sum_{i=1}^{3}\left|t_{ii}\right|\label{N=00005Crho-2}\\
{\rm such}\;{\rm that} & \;\;\; & \left|t_{ii}\right|^{2}+\left|t_{jj}\right|^{2}=\mathcal{M},\label{Bell-constraint}
\end{eqnarray}
where $\left\{ \left|t_{ii}\right|,\left|t_{jj}\right|\right\} \geq\left|t_{kk}\right|$
for $i\neq j\neq k\in\left\{ 1,2,3\right\} $ \footnote{For example: if $\left|t_{11}\right|>\left|t_{33}\right|>\left|t_{22}\right|$
then (\ref{Bell-constraint}) becomes $\left|t_{11}\right|^{2}+\left|t_{33}\right|^{2}=\mathcal{M}$
etc.}. Note that $\mathcal{M}$ does not depend on $t_{kk}$. Let us now
parametrize $\left|t_{ii}\right|$ and $\left|t_{jj}\right|$ by 
\begin{eqnarray*}
\left|t_{ii}\right| & = & \sqrt{\mathcal{M}}\cos\theta,\\
\left|t_{jj}\right| & = & \sqrt{\mathcal{M}}\sin\theta,
\end{eqnarray*}
where $\theta\in\left(0,\frac{\pi}{2}\right)$. Thus in order to solve
our problem, first we need to maximize the function
\begin{eqnarray*}
f\left(\theta\right) & = & \sqrt{\mathcal{M}}\left(\cos\theta+\sin\theta\right)
\end{eqnarray*}
for $\theta\in\left(0,\frac{\pi}{2}\right)$. The above function has
one critical point: $\theta^{*}=\frac{\pi}{4}$ in the range $\left(0,\frac{\pi}{2}\right)$,
and at this point 
\begin{equation}
f\left(\theta^{*}\right)=\sqrt{2\mathcal{M}},
\end{equation}
which is in fact the maximum. Consequently, $\left|t_{ii}\right|=\left|t_{jj}\right|=\sqrt{\frac{\mathcal{M}}{2}}$. 

So now we are left with the problem of maximizing $\sqrt{2\mathcal{M}}+\left|t_{kk}\right|$.
To maximize $\sqrt{2\mathcal{M}}+\left|t_{kk}\right|$ we can take
$\left|t_{kk}\right|$ as large as possible provided $|t_{kk}|\leq|t_{ii}|=|t_{jj}|$.
But we have already shown $\left|t_{ii}\right|=\left|t_{jj}\right|=\sqrt{\frac{\mathcal{M}}{2}}$.
Therefore, the maximum is obtained for $\left|t_{kk}\right|=\sqrt{\frac{\mathcal{M}}{2}}$. 

Thus the states that maximize the maximal fidelity for given $\mathcal{B}$
are those with $\left|t_{ii}\right|=\frac{\mathcal{B}}{2\sqrt{2}}$,
$i=1,2,3$. The largest maximal fidelity is given by 
\begin{eqnarray}
F_{\mathcal{B}} & = & \frac{1}{2}\left(1+\frac{\mathcal{B}}{2\sqrt{2}}\right)\label{maxF-Bell-violation}
\end{eqnarray}
Note here that $F_{\mathcal{B}}>\frac{2}{3}$ for $\mathcal{B}>2$.
So the largest maximal fidelity states are always useful for quantum
teleportation for all $\mathcal{B}>2$. Further, $\left|t_{ii}\right|$,
$i=1,2,3$ are all equal; hence, the largest maximal fidelity states
have zero fidelity deviation. 
\begin{prop}
The optimal two-qubit states for quantum teleportation for a given
$\mathcal{B}>2$ are those with $\left|t_{ii}\right|=\frac{\mathcal{B}}{2\sqrt{2}}$,
$i=1,2,3$. 
\end{prop}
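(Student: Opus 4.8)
The plan is to prove the stated characterization as a two-sided claim: a state $\rho\in S(\mathcal{B})$ is optimal if and only if $|t_{ii}|=\mathcal{B}/(2\sqrt{2})$ for $i=1,2,3$. By the definition of optimality this reduces to showing that the pair of conditions $F_\rho=F_{\mathcal{B}}$ and $\Delta_\rho=0$ holds exactly when all three $|t_{ii}|$ take the common value $\mathcal{B}/(2\sqrt{2})$. Most of the work is already in hand: the constrained optimization carried out above identifies the largest maximal fidelity states, so the proof is essentially an assembly of those results together with the fidelity and deviation formulas (\ref{F-rho}) and (\ref{Delta-rho}). I would stress at the outset that, in contrast with the linear-entropy case, the Bell-CHSH constraint $M(\varrho)=\mathcal{M}$ involves only the $|t_{ii}|$ and not the local vectors, so no restriction on $\bm{R},\bm{S}$ enters the characterization.

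For sufficiency I would take a state with $|t_{ii}|=\mathcal{B}/(2\sqrt{2})$ and check the three requirements in turn: that $M(\varrho)=\mathcal{B}^2/4$, so $B=\mathcal{B}$ and $\rho\in S(\mathcal{B})$; that (\ref{F-rho}) yields $F_\rho=F_{\mathcal{B}}$ as in (\ref{maxF-Bell-violation}), with $F_{\mathcal{B}}>2/3$ for $\mathcal{B}>2$ so the state is useful; and that equality of the three $|t_{ii}|$ makes the radicand in (\ref{Delta-rho}) vanish, giving $\Delta_\rho=0$. For necessity I would run the two defining conditions backwards: $\Delta_\rho=0$ forces $|t_{11}|=|t_{22}|=|t_{33}|$, while $F_\rho=F_{\mathcal{B}}$ forces $\sum_i|t_{ii}|=3\mathcal{B}/(2\sqrt{2})$; dividing, each $|t_{ii}|$ equals $\mathcal{B}/(2\sqrt{2})$.

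The step I would flag as the main (and essentially only) obstacle lies in the necessity argument, where applying (\ref{F-rho}) to an optimal state silently presumes $\det T<0$. I would dispatch this by appealing to the table: the rows $\det T=0$ and $\det T>0$ both give $F_\rho\leq 2/3$, whereas $F_\rho=F_{\mathcal{B}}>2/3$, so an optimal state must satisfy $\det T<0$ and the corresponding formulas apply. A secondary point worth a remark is that the equal-$|t_{ii}|$ configuration was shown above to be the \emph{unique} maximizer of $\sum_i|t_{ii}|$ under the constraint; this is what guarantees that the optimal set coincides with the full set of largest maximal fidelity states rather than a proper subset of it, the feature that will distinguish this case from concurrence.
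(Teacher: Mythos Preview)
Your proposal is correct and mirrors the paper's own argument: the proposition is stated in the paper as a summary of the preceding constrained optimization, which establishes that the unique largest-maximal-fidelity states have $|t_{ii}|=\mathcal{B}/(2\sqrt{2})$ for all $i$, and then observes that equality of the $|t_{ii}|$ gives $\Delta_\rho=0$. Your explicit sufficiency/necessity split and the $\det T<0$ check are helpful organizational additions, but the substance is the same.
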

The Hilbert-Schmidt decomposition of an optimal two-qubit state $\rho_{{\rm opt}}$
is given by 
\begin{eqnarray}
\rho_{{\rm opt}} & = & \frac{1}{4}\left(I\otimes I+\bm{R}\bm{\cdot\sigma}\otimes I+I\otimes\bm{S}\bm{\cdot\sigma}+\sum_{i,j=1}^{3}T_{ij}\sigma_{i}\otimes\sigma_{j}\right),\label{rho-HSdecomp-2}
\end{eqnarray}
where the eigenvalues $t_{ii}$ of $T$ are such that $\left|t_{ii}\right|=\frac{\mathcal{B}}{2\sqrt{2}}$
for $i=1,2,3$, and the canonical form can be expressed as 
\begin{eqnarray}
\varrho_{{\rm opt}} & = & \frac{1}{4}\left(I\otimes I+\bm{r\cdot\sigma}\otimes I+I\otimes\bm{s\cdot\sigma}-\frac{\mathcal{B}}{2\sqrt{2}}\sum_{i=1}^{3}\sigma_{i}\otimes\sigma_{i}\right).\label{canonical-opt-Bell}
\end{eqnarray}
Note that, unlike the optimal states for a given linear entropy, where
the local vectors need to be zero, no particular condition is being
imposed on the local vectors in this case. 

Now we would like to know whether a given state $\rho$ is optimal
for quantum teleportation for $B\left(\rho\right)=\mathcal{B}>2$.
We have the following proposition. 
\begin{prop}
Let $\rho$ be a two-qubit state for which the maximum mean value
of the Bell-CHSH observable is $\mathcal{B}>2$. Then $F_{\rho}=F_{\mathcal{B}}$
and $\Delta_{\rho}=0$ if and only if $\left|t_{ii}\right|=\frac{\mathcal{B}}{2\sqrt{2}}$
for $i=1,2,3$. 
\end{prop}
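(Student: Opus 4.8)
The plan is to mirror the structure of the proof given above for the linear-entropy analogue, exploiting the fact that the hypothesis $\mathcal{B}>2$ forces us into the regime $\det T<0$, where the single-formula expressions (\ref{F-rho}) and (\ref{Delta-rho}) are valid. First I would record the translation $\mathcal{M}=\mathcal{B}^{2}/4$ coming from $B=2\sqrt{M(\varrho)}$, so that the constraint $B(\rho)=\mathcal{B}$ is the same as $M(\varrho)=\mathcal{M}$ with $\mathcal{M}>1$. Since $M(\varrho)>1$ means $\rho$ violates the Bell-CHSH inequality, $\rho$ is useful for teleportation, and as noted in the preliminaries a useful state necessarily has $\det T<0$; hence the $\det T<0$ row of the table applies throughout, and I may use (\ref{F-rho}) and (\ref{Delta-rho}) freely.

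For the sufficiency direction I would simply substitute $\left|t_{ii}\right|=\mathcal{B}/(2\sqrt{2})$ for $i=1,2,3$ into (\ref{F-rho}), obtaining $\sum_{i}\left|t_{ii}\right|=3\mathcal{B}/(2\sqrt{2})$ and therefore $F_{\rho}=\tfrac12\left(1+\mathcal{B}/(2\sqrt{2})\right)$, which is precisely $F_{\mathcal{B}}$ from (\ref{maxF-Bell-violation}). Because the three moduli are equal, (\ref{Delta-rho}) gives $\Delta_{\rho}=0$. For internal consistency I would also verify that this choice actually reproduces $B(\rho)=\mathcal{B}$: with all moduli equal one has $M(\varrho)=2\left(\mathcal{B}/(2\sqrt{2})\right)^{2}=\mathcal{B}^{2}/4$, so indeed $B=2\sqrt{M}=\mathcal{B}$.

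For the necessity direction I would start from $\Delta_{\rho}=0$ and invoke the fact (stated just after (\ref{Delta-rho})) that this is equivalent to $\left|t_{11}\right|=\left|t_{22}\right|=\left|t_{33}\right|$; call the common value $t$. Then $F_{\rho}=F_{\mathcal{B}}$, together with (\ref{F-rho}) and (\ref{maxF-Bell-violation}), forces $\tfrac12(1+t)=\tfrac12\left(1+\mathcal{B}/(2\sqrt{2})\right)$, i.e.\ $t=\mathcal{B}/(2\sqrt{2})$, which is the desired conclusion. Equivalently, one could feed the equal moduli into the constraint $M(\varrho)=\mathcal{M}$ directly, getting $2t^{2}=\mathcal{B}^{2}/4$ and again $t=\mathcal{B}/(2\sqrt{2})$; I would mention this alternative since it makes transparent that the pinned-down value is consistent with the fixed $\mathcal{B}$.

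The only genuinely non-routine point is the justification that we sit in the $\det T<0$ branch, without which the clean expressions for $F_{\rho}$ and $\Delta_{\rho}$ would not be available and the whole argument would require case analysis over the sign of $\det T$; everything else is an elementary substitution. The hard part, if any, is therefore confined to making the implication ``$\mathcal{B}>2\Rightarrow$ Bell violation $\Rightarrow$ useful $\Rightarrow\det T<0$'' explicit, exactly as in the linear-entropy case treated above.
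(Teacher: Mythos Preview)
Your proposal is correct and matches the paper's intended approach: the paper states this proposition without an explicit proof precisely because it is the direct analogue of Proposition~2, and your argument is exactly that mirror, including the justification that $\mathcal{B}>2$ forces $\det T<0$ so that (\ref{F-rho}) and (\ref{Delta-rho}) apply.
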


\subsection{Optimal states for a given entanglement}

Now we consider the problem of characterizing the optimal two-qubit
states for a given concurrence \citep{Concurrence-1998}. The concurrence
for a two-qubit state $\rho$ is defined as \citep{Concurrence-1998}
\begin{eqnarray}
C\left(\rho\right) & = & \max\left(0,a_{1}-a_{2}-a_{3}-a_{4}\right),\label{concurrence}
\end{eqnarray}
where $a_{1}\geq a_{2}\geq a_{3}\geq a_{4}$ are the square roots
of the eigenvalues of the matrix $\tilde{\rho}$ defined as
\begin{eqnarray*}
\tilde{\rho} & = & \rho\left(\sigma_{2}\otimes\sigma_{2}\right)\rho^{*}\left(\sigma_{2}\otimes\sigma_{2}\right),
\end{eqnarray*}
where $\sigma_{2}$ is the Pauli bit-phase flip matrix, and $\rho^{*}$
denotes the complex conjugation of $\rho$ in the computational basis.
Note that, $0\leq C\left(\rho\right)\leq1$. 
\begin{fact}
A two-qubit state $\rho$ is entangled iff $C\left(\rho\right)>0$. 
\end{fact}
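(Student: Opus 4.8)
The plan is to connect the algebraic quantity $C(\rho)$ in (\ref{concurrence}) back to the convex-roof construction from which it originates, and then to relate that to separability. I would first recall that, by the result of \citep{Concurrence-1998}, the expression (\ref{concurrence}) equals the entanglement concurrence
\[
C(\rho) = \min_{\{p_i,|\psi_i\rangle\}} \sum_i p_i\, C(|\psi_i\rangle),
\]
where the minimum runs over all pure-state ensembles realizing $\rho = \sum_i p_i |\psi_i\rangle\langle\psi_i|$ and where, for a pure state, $C(|\psi\rangle) = |\langle \psi | \sigma_2\otimes\sigma_2 | \psi^*\rangle|$. The whole argument then reduces to understanding precisely when this convex roof vanishes.

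For the pure-state case I would use the Schmidt decomposition. Writing $|\psi\rangle = \sqrt{\mu_1}|e_1 f_1\rangle + \sqrt{\mu_2}|e_2 f_2\rangle$ in a suitable product basis, a direct computation of the spin-flipped overlap gives $C(|\psi\rangle) = 2\sqrt{\mu_1\mu_2}$. Hence $C(|\psi\rangle) = 0$ if and only if one Schmidt coefficient vanishes, i.e. if and only if $|\psi\rangle$ is a product state. This identifies the zero set of the pure-state concurrence exactly with the unentangled pure states.

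For the mixed-state case I would argue both directions using the convex-roof form. If $\rho$ is separable it admits a decomposition $\rho = \sum_i p_i |\alpha_i\rangle\langle\alpha_i| \otimes |\beta_i\rangle\langle\beta_i|$ into product states; each term has vanishing pure-state concurrence, so this particular ensemble makes the sum in the convex roof vanish, forcing $C(\rho) = 0$. Conversely, if $C(\rho) = 0$, then by \citep{Concurrence-1998} there is an optimal ensemble attaining the minimum, and since each $C(|\psi_i\rangle) \geq 0$ with weights $p_i \geq 0$, the sum can vanish only if $C(|\psi_i\rangle) = 0$ for every $i$ with $p_i > 0$; by the pure-state analysis each such $|\psi_i\rangle$ is then a product state, so $\rho$ is a convex combination of product states and hence separable. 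Taking the contrapositive of both implications yields the claim: $\rho$ is entangled if and only if $C(\rho) > 0$.

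The genuinely hard part is the identity invoked at the outset, namely that the closed-form expression $\max(0, a_1 - a_2 - a_3 - a_4)$ in (\ref{concurrence}) actually coincides with the convex-roof minimum of the pure-state concurrence. This is Wootters' theorem \citep{Concurrence-1998}; its proof requires the delicate construction of an optimal ensemble in the \emph{magic} (Bell) basis together with the spectral analysis of the symmetric matrix built from $\rho$ and $\tilde\rho$. Since the Fact cites \citep{Concurrence-1998} directly, I would treat this equivalence as established and build only the short separability argument above on top of it.
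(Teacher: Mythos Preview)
Your argument is correct and is the standard route to this statement via the convex-roof characterization of concurrence together with Wootters' closed-form theorem. There is nothing to compare, however: in the paper this is stated as a \emph{Fact} with a citation to \citep{Concurrence-1998} and no proof is given. The authors treat it as background, so your proposal supplies considerably more detail than the paper itself does.
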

For a two-qubit state $\rho$, $C\left(\rho\right)$ can be exactly
computed if we have complete knowledge of the state. However, $C\left(\rho\right)$
cannot be expressed in terms of the state parameters for a generic
$\rho$, or even for a canonical $\varrho$, except for some special
classes of states. Thus the previous method of obtaining the largest
maximal fidelity states is no longer useful. So we take a different
approach. 

From the results in \citep{VV-2-2002-concurrence} one can easily
show that the maximal fidelity $F_{\rho}$ of a two-qubit entangled
state $\rho$ is bounded above by 
\begin{equation}
F_{\rho}\leq\frac{2+N\left(\rho\right)}{3}\leq\frac{2+C\left(\rho\right)}{3},\label{upper bounds on fidelity}
\end{equation}
where $N\left(\rho\right)$ is the entanglement negativity, defined
as 
\begin{eqnarray}
N\left(\rho\right) & = & -2\lambda_{\min}\left(\rho^{\Gamma}\right),\label{negativity}
\end{eqnarray}
where $\lambda_{\min}$ is the smallest eigenvalue of the partial
transposed matrix $\rho^{\Gamma}$. The inequalities (\ref{upper bounds on fidelity})
are saturated for certain classes of states, and for such states $N\left(\rho\right)=C\left(\rho\right)$. 
\begin{prop}
\label{LMF-C} The largest maximal fidelity $F_{\mathcal{C}}$ achievable
for a given concurrence $\mathcal{C}$ is 
\begin{eqnarray}
F_{\mathcal{C}} & = & \frac{2+\mathcal{C}}{3}.\label{F(C)=00003Dgiven-concurrence}
\end{eqnarray}
\end{prop}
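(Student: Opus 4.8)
The plan is to prove the two inequalities $F_{\mathcal{C}}\le\frac{2+\mathcal{C}}{3}$ and $F_{\mathcal{C}}\ge\frac{2+\mathcal{C}}{3}$ separately. The upper bound is immediate from the material already in hand: for every $\rho\in S\left(\mathcal{C}\right)$ the chain (\ref{upper bounds on fidelity}) gives $F_{\rho}\le\frac{2+C\left(\rho\right)}{3}=\frac{2+\mathcal{C}}{3}$, and taking the maximum over $S\left(\mathcal{C}\right)$ yields $F_{\mathcal{C}}\le\frac{2+\mathcal{C}}{3}$. Thus the real content of the proposition is the \emph{tightness} of this bound: I must exhibit, for each admissible value $\mathcal{C}\in\left(0,1\right]$, at least one state with concurrence exactly $\mathcal{C}$ whose maximal fidelity equals $\frac{2+\mathcal{C}}{3}$.

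For achievability I would use the non-maximally entangled pure states $\left|\psi_{\alpha}\right\rangle=\cos\alpha\left|00\right\rangle+\sin\alpha\left|11\right\rangle$, which are the natural candidates for the saturating class alluded to just after (\ref{upper bounds on fidelity}). A direct computation of $T_{ij}=\mathrm{Tr}\left(\rho\,\sigma_{i}\otimes\sigma_{j}\right)$ for $\rho=\left|\psi_{\alpha}\right\rangle\!\left\langle\psi_{\alpha}\right|$ gives correlation-matrix eigenvalues $\left(t_{11},t_{22},t_{33}\right)=\left(\sin2\alpha,-\sin2\alpha,1\right)$, so $\det T=-\sin^{2}2\alpha<0$ and formula (\ref{F-rho}) applies with $\sum_{i}\left|t_{ii}\right|=1+2\sin2\alpha$, giving $F_{\rho}=\frac{2+\sin2\alpha}{3}$. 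On the other hand the concurrence of $\left|\psi_{\alpha}\right\rangle$ is $C=\sin2\alpha$, so $F_{\rho}=\frac{2+C}{3}$ exactly. Equivalently, one checks $\lambda_{\min}\left(\rho^{\Gamma}\right)=-\tfrac{1}{2}\sin2\alpha$, whence $N\left(\rho\right)=\sin2\alpha=C$ and both inequalities in (\ref{upper bounds on fidelity}) are simultaneously tight. As $\alpha$ ranges over $\left(0,\frac{\pi}{4}\right]$ the quantity $\sin2\alpha$ sweeps the whole interval $\left(0,1\right]$, so for any prescribed $\mathcal{C}\in\left(0,1\right]$ one chooses $\alpha$ with $\sin2\alpha=\mathcal{C}$; the corresponding $\left|\psi_{\alpha}\right\rangle$ lies in $S\left(\mathcal{C}\right)$ and attains $F_{\rho}=\frac{2+\mathcal{C}}{3}$. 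This establishes $F_{\mathcal{C}}\ge\frac{2+\mathcal{C}}{3}$ and, combined with the upper bound, proves (\ref{F(C)=00003Dgiven-concurrence}); since the value is genuinely attained, the maximum in (\ref{LMF}) is a true maximum.

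I do not expect a serious obstacle here, precisely because the hard analytic input---the upper bound---is already supplied by (\ref{upper bounds on fidelity}), quoted from \citep{VV-2-2002-concurrence}. The only point requiring care is the identification and verification of a saturating family covering the full range of $\mathcal{C}$, and the pure Schmidt states do this cleanly in one stroke (both the $F_{\rho}$ computation via the $T$-matrix and the negativity check are routine). The genuinely delicate work is deferred to the subsequent step, namely deciding which states in $S\left(F_{\mathcal{C}}\vert\mathcal{C}\right)$ \emph{also} satisfy $\Delta_{\rho}=0$: unlike the linear-entropy and Bell-CHSH cases, the largest-maximal-fidelity set here is not pinned down by a condition on the $\left|t_{ii}\right|$ alone (concurrence is not a function of the $T$-matrix eigenvalues), so the optimal set will turn out to be a strict subset, and that is where the careful analysis must go.
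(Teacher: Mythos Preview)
Your proof is correct and follows essentially the same logic as the paper: the upper bound is taken directly from (\ref{upper bounds on fidelity}) (quoted from \citep{VV-2-2002-concurrence}), and tightness follows because the bound is saturated by a suitable class of states. The paper does not spell out an explicit saturating family at this point, whereas you supply the pure Schmidt states and verify saturation via the $T$-matrix and concurrence computations; this is a harmless (and welcome) elaboration rather than a different route.
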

\begin{rem*}
The above equation shows that $F_{\mathcal{C}}>\frac{2}{3}$, whenever
$\mathcal{C}>0$. Thus the largest maximal fidelity states for a given
concurrence are always useful for quantum teleportation \footnote{This is particularly noteworthy because not all entangled states are
directly useful for quantum teleportation, but the states with the
largest maximal fidelity for any given concurrence $C>0$ always are. }. 
\end{rem*}
The following lemma, proved in \citep{VV-2-2002-concurrence}, provides
the necessary and sufficient condition for Eq.$\,$(\ref{F(C)=00003Dgiven-concurrence})
to hold. 
\begin{lem}
\label{saturation} Let $\rho$ be a two-qubit state of concurrence
$C\left(\rho\right)>0$. Then $F_{\rho}=F_{C\left(\rho\right)}$,
where $F_{C\left(\rho\right)}$ is the largest maximal fidelity achievable
by all two-qubit states of concurrence equal to $C\left(\rho\right)$,
if and only if 
\begin{eqnarray}
\rho^{\Gamma}\left|\Psi\right\rangle  & = & \lambda_{\min}\left|\Psi\right\rangle ,\label{N=000026C-condition}
\end{eqnarray}
where $\left|\Psi\right\rangle $ is a maximally entangled state. 
\end{lem}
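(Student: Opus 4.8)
The plan is to read the claimed equivalence through the two inequalities in (\ref{upper bounds on fidelity}), $F_\rho\le\frac{2+N(\rho)}{3}\le\frac{2+C(\rho)}{3}$, and to exploit that, by Proposition \ref{LMF-C}, $F_{C(\rho)}=\frac{2+C(\rho)}{3}$ sits exactly at the right end of this chain. Hence $F_\rho=F_{C(\rho)}$ forces the whole chain to collapse, so that \emph{both} links are tight: the first link $F_\rho=\frac{2+N(\rho)}{3}$ and the second link $N(\rho)=C(\rho)$. Conversely, to recover $F_\rho=F_{C(\rho)}$ from the eigenvector condition (\ref{N=000026C-condition}) I must show that (\ref{N=000026C-condition}) makes the first link tight and, in addition, forces $N(\rho)=C(\rho)$. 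Thus the proof splits naturally into a transparent ``fidelity versus negativity'' part and a deeper ``negativity versus concurrence'' part.

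For the first part I would characterise saturation of $F_\rho=\frac{2+N(\rho)}{3}$ directly. Using the Horodecki relation $F_\rho=\frac{2f+1}{3}$ between maximal fidelity and the maximal singlet fraction $f=\max_{|\Phi\rangle}\langle\Phi|\rho|\Phi\rangle$ (maximum over maximally entangled $|\Phi\rangle$), together with the identity $(|\Phi\rangle\langle\Phi|)^\Gamma=\frac{1}{2}I-|\chi\rangle\langle\chi|$, in which $|\chi\rangle$ is again maximally entangled, one obtains $\langle\Phi|\rho|\Phi\rangle=\mathrm{Tr}\!\big(\rho^\Gamma(|\Phi\rangle\langle\Phi|)^\Gamma\big)=\frac{1}{2}-\langle\chi|\rho^\Gamma|\chi\rangle$. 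As $|\Phi\rangle$ runs over all maximally entangled states so does $|\chi\rangle$, hence $f=\frac{1}{2}-\min_{|\chi\rangle\ \mathrm{ME}}\langle\chi|\rho^\Gamma|\chi\rangle$ and therefore $F_\rho=\frac{2+N(\rho)}{3}$ holds exactly when the constrained minimum over maximally entangled states coincides with the global minimum $\lambda_{\min}(\rho^\Gamma)$ appearing in (\ref{negativity}). Since any vector attaining the global minimum of the quadratic form $\langle\cdot|\rho^\Gamma|\cdot\rangle$ is precisely a $\lambda_{\min}$-eigenvector of $\rho^\Gamma$, this coincidence is equivalent to the existence of a maximally entangled eigenvector of $\rho^\Gamma$ for $\lambda_{\min}$, i.e.\ to condition (\ref{N=000026C-condition}). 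This already settles necessity: if $F_\rho=F_{C(\rho)}$ the chain collapses, so the first link is tight and (\ref{N=000026C-condition}) holds.

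The remaining, and main, obstacle is the sufficiency upgrade: assuming (\ref{N=000026C-condition}) the argument above yields $F_\rho=\frac{2+N(\rho)}{3}$, but to reach $F_\rho=\frac{2+C(\rho)}{3}$ I still need $N(\rho)=C(\rho)$. Since $N(\rho)\le C(\rho)$ always holds for two qubits (it is the second inequality in (\ref{upper bounds on fidelity})), it suffices to prove $N(\rho)\ge C(\rho)$ under (\ref{N=000026C-condition}). This is the genuinely two-qubit-specific step, and is where I would invoke the analysis of \citep{VV-2-2002-concurrence}: their explicit description of the two-qubit partial-transpose spectrum shows that whenever the most negative eigenvalue of $\rho^\Gamma$ is carried by a maximally entangled eigenvector, the negativity and the concurrence coincide. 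As a consistency check, every pure state satisfies (\ref{N=000026C-condition}) -- the most negative eigenvector of $\rho^\Gamma$ lies in the antisymmetric sector and is maximally entangled -- and indeed $N=C$ there. I would not reprove this concurrence--negativity equality from scratch but cite it, and then assemble the pieces: (\ref{N=000026C-condition}) $\Rightarrow$ first link tight and $N(\rho)=C(\rho)$ $\Rightarrow$ $F_\rho=\frac{2+C(\rho)}{3}=F_{C(\rho)}$, which together with the necessity argument completes both directions.
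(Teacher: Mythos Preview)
Your proposal is correct. The paper itself does not prove this lemma; it simply attributes it to Verstraete and Verschelde \citep{VV-2-2002-concurrence} and moves on. Your approach is therefore strictly more informative than the paper's. You unpack the statement via the singlet-fraction identity $f=\tfrac12-\min_{|\chi\rangle\ \mathrm{ME}}\langle\chi|\rho^\Gamma|\chi\rangle$ (obtained from $(|\Phi\rangle\langle\Phi|)^\Gamma=\tfrac12 I-|\chi\rangle\langle\chi|$) to show, in a self-contained way, that the eigenvector condition (\ref{N=000026C-condition}) is exactly the saturation condition for the first inequality $F_\rho\le\frac{2+N(\rho)}{3}$ in the chain (\ref{upper bounds on fidelity}). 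You then correctly isolate the nontrivial residual claim---that (\ref{N=000026C-condition}) forces $N(\rho)=C(\rho)$---as the piece still to be cited from \citep{VV-2-2002-concurrence}. This decomposition makes transparent which half of the equivalence is elementary linear algebra and which half is genuinely two-qubit specific; the paper's bare citation does not make that distinction. The tradeoff is that your sufficiency direction still rests on the cited $N=C$ result, so neither your argument nor the paper's is fully self-contained, but yours localizes the external input more sharply and gives a cleaner picture of why the maximally-entangled-eigenvector condition is the right one.
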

Note that $\lambda_{\min}=-\frac{C\left(\rho\right)}{2}$ when the
above eigenvalue equation holds. This is because negativity equals
concurrence when the inequalities are saturated. 

The implication of the above lemma can be understood as follows: For
a given two-qubit entangled state $\rho$, let $\lambda_{\min}$ be
the minimum eigenvalue (negative, of course) of $\rho^{\Gamma}$ and
let $\left|\phi\right\rangle $ be the corresponding eigenvector,
which may or may not be maximally entangled. Then Lemma \ref{saturation}
tells us that if $\left|\phi\right\rangle $ is maximally entangled,
then $N\left(\rho\right)=C\left(\rho\right)$ and $F_{\rho}=F_{C\left(\rho\right)}$,
but if, on the other hand, $\left|\phi\right\rangle $ is not maximally
entangled, then $N\left(\rho\right)<C\left(\rho\right)$ and $F_{\rho}<F_{C\left(\rho\right)}$.
In particular, when $\left|\phi\right\rangle $ is maximally entangled
we get two important pieces of information without explicit calculations:
first, we get to know the concurrence of the state and second, we
get to know the maximal fidelity of the state. 

So whether the equality $F_{\rho}=F_{C\left(\rho\right)}$ holds or
not is completely determined by the nature of the eigenvector $\left|\phi\right\rangle $.
But there is no easy way to check this without explicit calculations
which may be complicated for arbitrary states. Nevertheless, we will
show that there is an efficient way, one that invokes the canonical
representation of $\rho$. 

First we have an important lemma. 
\begin{lem}
\label{t_ii=00003D2C+1} Let $\rho$ be a two-qubit state $\rho$
of concurrence $C\left(\rho\right)>0$. Then $F_{\rho}=F_{C\left(\rho\right)}$
if and only if 
\begin{eqnarray}
\sum_{i=1}^{3}\left|t_{ii}\right| & = & 2C\left(\rho\right)+1.\label{t(ii)related-to-Concurrence}
\end{eqnarray}
\end{lem}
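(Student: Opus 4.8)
The plan is to reduce both implications to the explicit formula (\ref{F-rho}) for the maximal fidelity, which holds only in the regime $\det T<0$, together with the value $F_{\mathcal{C}}=\frac{2+\mathcal{C}}{3}$ established in Proposition \ref{LMF-C}. The observation that makes everything work is that the hypothesis $C(\rho)>0$ forces us into the $\det T<0$ case in both directions, so the simple linear expression relating $F_\rho$ to $\sum_{i}|t_{ii}|$ is always available. For the ``if'' direction I would assume $\sum_{i=1}^{3}|t_{ii}|=2C(\rho)+1$; since $C(\rho)>0$ this gives $\sum_{i}|t_{ii}|>1$, which by the usefulness criterion means $\rho$ is useful and hence has $\det T<0$. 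Then (\ref{F-rho}) applies, and substituting the hypothesis yields $F_\rho=\frac{1}{2}\bigl(1+\frac{1}{3}(2C(\rho)+1)\bigr)=\frac{2+C(\rho)}{3}$, which equals $F_{C(\rho)}$ by Proposition \ref{LMF-C}.

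For the ``only if'' direction I would assume $F_\rho=F_{C(\rho)}=\frac{2+C(\rho)}{3}$. Because $C(\rho)>0$, this value exceeds $\frac{2}{3}$, so $\rho$ is again useful and satisfies $\det T<0$, and (\ref{F-rho}) holds once more. Equating $\frac{1}{2}\bigl(1+\frac{1}{3}\sum_{i}|t_{ii}|\bigr)$ with $\frac{2+C(\rho)}{3}$ and solving the resulting linear equation immediately gives $\sum_{i=1}^{3}|t_{ii}|=2C(\rho)+1$.

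The only genuine subtlety, and thus the ``hard part,'' is justifying the use of (\ref{F-rho}) rather than one of the other two rows of the table: one must rule out $\det T\geq 0$, where the maximal fidelity never exceeds $\frac{2}{3}$ and is given by a different expression. This is exactly what the positivity of the concurrence secures, since $C(\rho)>0$ guarantees both $F_{C(\rho)}>\frac{2}{3}$ and $2C(\rho)+1>1$, each of which certifies usefulness and therefore $\det T<0$. Beyond this case check the argument is a single line of algebra in each direction.
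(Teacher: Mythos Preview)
Your proof is correct and follows the same route as the paper, which simply states that the lemma ``follows from (\ref{F-rho}) and (\ref{F(C)=00003Dgiven-concurrence}).'' You have in fact been more careful than the paper by explicitly justifying, in each direction, why the $\det T<0$ formula (\ref{F-rho}) applies rather than one of the other rows of the table; the paper leaves this implicit.
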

The proof follows from (\ref{F-rho}) and (\ref{F(C)=00003Dgiven-concurrence}).
The lemma tells us that, for a given two-qubit state $\rho$ of concurrence
$C\left(\rho\right)>0$, if (\ref{t(ii)related-to-Concurrence}) holds
then $F_{\rho}=F_{C\left(\rho\right)}$, and if it does not then $F_{\rho}<F_{C\left(\rho\right)}$. 

Let us now look at a useful consequence of the above lemma.
\begin{cor}
\label{corollary =00005Clambda_min} Let $\rho$ be a two-qubit state
of concurrence $C\left(\rho\right)>0$ satisfying the eigenvalue equation
(\ref{N=000026C-condition}). Then 
\begin{eqnarray}
\lambda_{\min} & = & -\frac{1}{4}\left(\sum_{i=1}^{3}\left|t_{ii}\right|-1\right).\label{lambda-min}
\end{eqnarray}
\end{cor}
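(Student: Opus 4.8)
The plan is to chain together the three ingredients already assembled above, since the hypothesis that $\rho$ satisfies the eigenvalue equation (\ref{N=000026C-condition}) places us squarely in the saturated regime where every relevant quantity has an explicit expression. First I would invoke Lemma \ref{saturation}: because $\rho^{\Gamma}$ attains its minimal eigenvalue $\lambda_{\min}$ on a maximally entangled eigenvector, the upper bounds (\ref{upper bounds on fidelity}) are tight, so $F_{\rho}=F_{C\left(\rho\right)}$ and, as recorded in the remark following that lemma, the negativity coincides with the concurrence. Via the definition (\ref{negativity}), namely $N\left(\rho\right)=-2\lambda_{\min}$, this coincidence immediately gives $\lambda_{\min}=-\tfrac{1}{2}C\left(\rho\right)$.

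Next I would translate the fidelity equality into a statement about the $T$-matrix eigenvalues. Since $F_{\rho}=F_{C\left(\rho\right)}$ holds, Lemma \ref{t_ii=00003D2C+1} applies and yields $\sum_{i=1}^{3}\left|t_{ii}\right|=2C\left(\rho\right)+1$, equivalently $C\left(\rho\right)=\tfrac{1}{2}\bigl(\sum_{i=1}^{3}\left|t_{ii}\right|-1\bigr)$. Substituting this expression for $C\left(\rho\right)$ into $\lambda_{\min}=-\tfrac{1}{2}C\left(\rho\right)$ produces exactly (\ref{lambda-min}), completing the argument.

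There is essentially no hard step here: the corollary is a bookkeeping consequence of Lemma \ref{saturation} (which supplies $\lambda_{\min}=-\tfrac{1}{2}C\left(\rho\right)$) together with Lemma \ref{t_ii=00003D2C+1} (which re-expresses $C\left(\rho\right)$ through $\sum_{i}\left|t_{ii}\right|$). The only point requiring a moment's care is confirming that the identification $\lambda_{\min}=-\tfrac{1}{2}C\left(\rho\right)$ is genuinely licensed by the hypothesis rather than silently assumed: it holds because the eigenvalue equation (\ref{N=000026C-condition}) is precisely the saturation condition under which $N\left(\rho\right)=C\left(\rho\right)$, so that the generically valid relation $\lambda_{\min}=-\tfrac{1}{2}N\left(\rho\right)$ collapses to $-\tfrac{1}{2}C\left(\rho\right)$. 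Once this is observed, the remaining substitution is purely algebraic.
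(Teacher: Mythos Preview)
Your proposal is correct and follows essentially the same chain of reasoning as the paper's own proof: from the eigenvalue equation one obtains $F_{\rho}=F_{C(\rho)}$ via Lemma~\ref{saturation}, then Lemma~\ref{t_ii=00003D2C+1} yields $C(\rho)=\tfrac{1}{2}\bigl(\sum_i|t_{ii}|-1\bigr)$, and substituting into $\lambda_{\min}=-\tfrac{1}{2}C(\rho)$ gives the result. If anything, you are slightly more explicit than the paper in unpacking why $\lambda_{\min}=-\tfrac{1}{2}C(\rho)$ holds (via $N(\rho)=C(\rho)$ and the definition of negativity), but the argument is otherwise identical.
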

\begin{proof}
Suppose a two-qubit state $\rho$ of concurrence $C\left(\rho\right)>0$
satisfies the eigenvalue equation (\ref{N=000026C-condition}). Then
$F_{\rho}=F_{C\left(\rho\right)}$. From Lemma \ref{t_ii=00003D2C+1}
we know that for such a state $\sum_{i=1}^{3}\left|t_{ii}\right|=2C\left(\rho\right)+1$
holds, and therefore, $C\left(\rho\right)=\frac{\stackrel[i=1]{3}{\sum}\left|t_{ii}\right|-1}{2}$.
Noting that $\lambda_{\min}=-\frac{C\left(\rho\right)}{2}$ we get
the desired expression.
\end{proof}
We now show an efficient way to find whether the eigenvalue equation
(\ref{N=000026C-condition}) is satisfied. First, we prove that a
similar eigenvalue equation holds for any other density matrix related
to $\rho$ by some local unitary transformation, where the eigenvalue
remains the same as before but the corresponding eigenvector, although
maximally entangled, is different in general. 
\begin{lem}
\label{unitary-relation} Let $\rho$ be a two-qubit state of concurrence
$C\left(\rho\right)>0$ with the property $F_{\rho}=F_{C\left(\rho\right)}$.
Let $\rho^{\prime}$ be another two-qubit state given by $\rho^{\prime}=\left(U\otimes V\right)\rho\left(U^{\dagger}\otimes V^{\dagger}\right)$,
where $U$, $V$ are unitary operators. Then 
\begin{eqnarray}
\left(\rho^{\prime}\right)^{\Gamma}\left|\Psi^{\prime}\right\rangle  & = & \lambda_{\min}\left|\Psi^{\prime}\right\rangle ,\label{unitary-equivalent-eigenvalue-equation}
\end{eqnarray}
where $\left|\Psi^{\prime}\right\rangle =\left(U\otimes V^{T}\right)\left|\Psi\right\rangle $
is a maximally entangled state. 
\end{lem}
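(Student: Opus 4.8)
The plan is to reduce the whole lemma to a single transformation law: how the partial transpose interacts with local unitary conjugation. Since we are assuming $F_{\rho}=F_{C\left(\rho\right)}$, Lemma \ref{saturation} already hands us the starting point, namely the eigenvalue equation $\rho^{\Gamma}\left|\Psi\right\rangle =\lambda_{\min}\left|\Psi\right\rangle $ with $\left|\Psi\right\rangle $ maximally entangled and $\lambda_{\min}=-\frac{C\left(\rho\right)}{2}$. The task is then purely operator-algebraic: express $\left(\rho^{\prime}\right)^{\Gamma}$ in terms of $\rho^{\Gamma}$ for $\rho^{\prime}=\left(U\otimes V\right)\rho\left(U^{\dagger}\otimes V^{\dagger}\right)$, and read off what happens to the eigenvalue and the eigenvector.

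First I would fix $\Gamma$ to be the partial transpose on the second qubit and expand $\rho=\sum_{k}A_{k}\otimes B_{k}$ in a product-operator basis, so that $\rho^{\Gamma}=\sum_{k}A_{k}\otimes B_{k}^{T}$. Conjugating by $U\otimes V$ and then transposing the second factor, I would use the reversal identity $\left(VB_{k}V^{\dagger}\right)^{T}=\left(V^{\dagger}\right)^{T}B_{k}^{T}V^{T}=\overline{V}B_{k}^{T}\left(\overline{V}\right)^{\dagger}$, while the first factor $UA_{k}U^{\dagger}$ is untouched by this partial transpose, to obtain the central identity
\[
\left(\rho^{\prime}\right)^{\Gamma}=\left(U\otimes\overline{V}\right)\rho^{\Gamma}\left(U\otimes\overline{V}\right)^{\dagger}.
\]
The hard part, and really the entire content of the lemma, is precisely this bookkeeping: the transpose reverses the order of the factors and sends $V^{\dagger}$ to $\overline{V}$, so the unitary induced on the transposed party is the complex conjugate $\overline{V}=V^{*}$ (note $\left(\overline{V}\right)^{\dagger}=V^{T}$, which is how the statement records it), and one must be careful not to mistake it for $V$ or $V^{\dagger}$. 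Getting this conjugate and this ordering right is where the only real risk of error lies.

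With the identity established, the conclusion is immediate. Conjugation by the unitary $U\otimes\overline{V}$ preserves the spectrum of $\rho^{\Gamma}$, so its minimal eigenvalue is still $\lambda_{\min}$ (which is consistent, since $C$ and hence $\lambda_{\min}=-\frac{C\left(\rho\right)}{2}$ are local-unitary invariants, so $C\left(\rho^{\prime}\right)=C\left(\rho\right)$). Applying $U\otimes\overline{V}$ to both sides of $\rho^{\Gamma}\left|\Psi\right\rangle =\lambda_{\min}\left|\Psi\right\rangle $ and inserting $\left(U\otimes\overline{V}\right)^{\dagger}\left(U\otimes\overline{V}\right)=I$ then gives $\left(\rho^{\prime}\right)^{\Gamma}\left|\Psi^{\prime}\right\rangle =\lambda_{\min}\left|\Psi^{\prime}\right\rangle $ with $\left|\Psi^{\prime}\right\rangle =\left(U\otimes\overline{V}\right)\left|\Psi\right\rangle $. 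Finally, since $\left|\Psi\right\rangle $ is maximally entangled and local unitaries preserve maximal entanglement, $\left|\Psi^{\prime}\right\rangle $ is again maximally entangled, which is exactly the asserted form of the eigenvector and completes the proof.
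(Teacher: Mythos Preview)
Your approach is essentially the paper's: both derive the transformation law for the partial transpose under local unitary conjugation and then transport the eigenvalue equation. The paper writes $\rho=(U^{\dagger}\otimes V^{\dagger})\rho^{\prime}(U\otimes V)$, takes $\Gamma$ on the second factor, and left-multiplies to isolate $(\rho^{\prime})^{\Gamma}$; you instead compute $(\rho^{\prime})^{\Gamma}=(U\otimes\overline{V})\,\rho^{\Gamma}\,(U\otimes\overline{V})^{\dagger}$ directly and conjugate the eigenvector. Same idea, same length.

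One bookkeeping point deserves a flag. Your identity correctly gives the eigenvector $\left|\Psi^{\prime}\right\rangle=(U\otimes\overline{V})\left|\Psi\right\rangle=(U\otimes V^{*})\left|\Psi\right\rangle$, not $(U\otimes V^{T})\left|\Psi\right\rangle$ as the lemma records; your parenthetical ``$(\overline{V})^{\dagger}=V^{T}$, which is how the statement records it'' does not actually bridge that gap, since the statement has $V^{T}$ \emph{acting} on $\left|\Psi\right\rangle$, not appearing as an adjoint. In fact the general rule $[(A\otimes B)M(C\otimes D)]^{\Gamma_{B}}=(A\otimes D^{T})M^{\Gamma_{B}}(C\otimes B^{T})$ shows your $V^{*}$ is the right one, and the paper's own intermediate step has $V^{*}$ and $V^{T}$ interchanged. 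This is harmless for the downstream use of the lemma, which only needs that $\left|\Psi^{\prime}\right\rangle$ is maximally entangled, and both $(U\otimes V^{*})\left|\Psi\right\rangle$ and $(U\otimes V^{T})\left|\Psi\right\rangle$ are; but you should not claim you have recovered ``exactly the asserted form.''
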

The proof is given in the appendix. 

Thus the unitary freedom in the eigenvalue equation (\ref{N=000026C-condition}),
as reflected in (\ref{unitary-equivalent-eigenvalue-equation}), leaves
open the possible existence of some useful $\rho^{\prime}$ for which
finding the eigenvector becomes an easy task. Indeed, for a canonical
$\varrho$ the eigenvector is uniquely determined in the sense that
it no longer depends on $\rho$. 
\begin{lem}
\label{varrho-N+Scondition} Let $\varrho$ be the canonical form
of a two-qubit state $\rho$ of concurrence $C\left(\rho\right)>0$.
Then $F_{\rho}=F_{C\left(\rho\right)}$ if and only if 
\begin{eqnarray}
\varrho^{\Gamma}\left|\Phi^{+}\right\rangle  & = & \lambda_{\min}\left|\Phi^{+}\right\rangle ,\label{Prop-5}
\end{eqnarray}
where $\left|\Phi^{+}\right\rangle =\frac{1}{\sqrt{2}}\left(\left|00\right\rangle +\left|11\right\rangle \right)$
and $\lambda_{\min}=-\frac{1}{4}\left(\stackrel[i=1]{3}{\sum}\left|t_{ii}\right|-1\right)$. 
\end{lem}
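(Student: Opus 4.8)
The plan is to deduce the lemma from Lemma~\ref{saturation} by pairing a single closed-form computation with the variational characterization of the minimum eigenvalue of a Hermitian operator. The guiding observation is that, for the canonical form, the diagonal matrix element $\langle\Phi^{+}|\varrho^{\Gamma}|\Phi^{+}\rangle$ is independent of the local vectors and always equals $-\frac{1}{4}\left(\sum_{i}|t_{ii}|-1\right)$, i.e. exactly the value $\lambda_{\min}$ appearing in the statement.

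First I would compute $\varrho^{\Gamma}$ from (\ref{canonical-rho-detT<0}). Partial transposition on the second qubit fixes $I,\sigma_{1},\sigma_{3}$ and sends $\sigma_{2}\mapsto-\sigma_{2}$, so the correlation block $-\sum_{i}|t_{ii}|\sigma_{i}\otimes\sigma_{i}$ becomes $-|t_{11}|\sigma_{1}\otimes\sigma_{1}+|t_{22}|\sigma_{2}\otimes\sigma_{2}-|t_{33}|\sigma_{3}\otimes\sigma_{3}$. Using $\sigma_{1}\otimes\sigma_{1}|\Phi^{+}\rangle=|\Phi^{+}\rangle$, $\sigma_{2}\otimes\sigma_{2}|\Phi^{+}\rangle=-|\Phi^{+}\rangle$, and $\sigma_{3}\otimes\sigma_{3}|\Phi^{+}\rangle=|\Phi^{+}\rangle$, the identity-plus-correlation part of $\varrho^{\Gamma}$ acts on $|\Phi^{+}\rangle$ as the scalar $\frac{1}{4}\left(1-\sum_{i}|t_{ii}|\right)$.

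The decisive point is that the local-vector terms drop out of this diagonal element. Each operator $\sigma_{i}\otimes I$ and $I\otimes\sigma_{i}$ carries $|\Phi^{+}\rangle$ into the span of the orthogonal Bell states $|\Phi^{-}\rangle,|\Psi^{+}\rangle,|\Psi^{-}\rangle$, so $\langle\Phi^{+}|\sigma_{i}\otimes I|\Phi^{+}\rangle=\langle\Phi^{+}|I\otimes\sigma_{i}|\Phi^{+}\rangle=0$, whence $\langle\Phi^{+}|\varrho^{\Gamma}|\Phi^{+}\rangle=\frac{1}{4}\left(1-\sum_{i}|t_{ii}|\right)=\lambda_{\min}$ for every choice of $\bm{r},\bm{s}$. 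The variational principle applied to the Hermitian operator $\varrho^{\Gamma}$ then gives $\lambda_{\min}(\varrho^{\Gamma})\leq\langle\Phi^{+}|\varrho^{\Gamma}|\Phi^{+}\rangle$, with equality precisely when $|\Phi^{+}\rangle$ lies in the minimal eigenspace, that is, when $\varrho^{\Gamma}|\Phi^{+}\rangle=\lambda_{\min}|\Phi^{+}\rangle$.

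Both implications then fall out. For the forward direction, $F_{\rho}=F_{C(\rho)}$ together with Corollary~\ref{corollary =00005Clambda_min} pins the true minimum eigenvalue to $\lambda_{\min}(\varrho^{\Gamma})=-\frac{1}{4}\left(\sum_{i}|t_{ii}|-1\right)$; since this already equals $\langle\Phi^{+}|\varrho^{\Gamma}|\Phi^{+}\rangle$, the equality case above gives $\varrho^{\Gamma}|\Phi^{+}\rangle=\lambda_{\min}|\Phi^{+}\rangle$. For the converse, if $\varrho^{\Gamma}|\Phi^{+}\rangle=\lambda_{\min}|\Phi^{+}\rangle$ then the maximally entangled state $|\Phi^{+}\rangle$ is a minimal eigenvector of $\varrho^{\Gamma}$, and because $F_{\varrho}=F_{\rho}$ and $C(\varrho)=C(\rho)$ under local unitaries, Lemma~\ref{saturation} applied to $\varrho$ yields $F_{\rho}=F_{C(\rho)}$. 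The main obstacle will be the sign bookkeeping in the first step -- tracking the $\sigma_{2}$ sign flips under partial transposition and verifying that every local-vector operator is genuinely off-diagonal against $|\Phi^{+}\rangle$; once the diagonal element is locked to $\lambda_{\min}$, the variational equality condition supplies the eigenvector statement with no further information about $\bm{r}$ and $\bm{s}$ required.
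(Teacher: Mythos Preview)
Your argument is correct and takes a genuinely different route from the paper. The paper proves the forward direction by first transporting the eigenvalue equation from $\rho$ to $\varrho$ via Lemma~\ref{unitary-relation}, and then in the appendix expands the (as yet unknown) eigenvector in the Bell basis and solves a $4\times4$ linear system to force it to be $|\Phi^{+}\rangle$. You bypass that computation entirely: the single identity $\langle\Phi^{+}|\varrho^{\Gamma}|\Phi^{+}\rangle=\tfrac{1}{4}\bigl(1-\sum_{i}|t_{ii}|\bigr)$, valid for arbitrary $\bm r,\bm s$, together with the equality case of the Rayleigh--Ritz inequality, immediately pins $|\Phi^{+}\rangle$ to the minimal eigenspace once the minimum eigenvalue is known. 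This is shorter and conceptually cleaner; the paper's approach, on the other hand, yields as a by-product that the minimal eigenvector is \emph{uniquely} $|\Phi^{+}\rangle$, whereas your variational argument only places it in the minimal eigenspace (which is all the lemma asks).

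Two small points to tighten. First, Corollary~\ref{corollary =00005Clambda_min} is stated for $\rho$, so to get $\lambda_{\min}(\varrho^{\Gamma})=-\tfrac{1}{4}\bigl(\sum_{i}|t_{ii}|-1\bigr)$ you should note (or invoke Lemma~\ref{unitary-relation}) that the spectra of $\rho^{\Gamma}$ and $\varrho^{\Gamma}$ coincide under local unitaries. Second, in the converse you assert that $|\Phi^{+}\rangle$ is a \emph{minimal} eigenvector; this needs the observation that $-\tfrac{1}{4}\bigl(\sum_{i}|t_{ii}|-1\bigr)$ is negative and that the partial transpose of an entangled two-qubit state has exactly one negative eigenvalue. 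The paper's own converse glosses over this in the same way, so you are no worse off, but it is worth making explicit.
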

\begin{proof}
Let $F_{\rho}=F_{C\left(\rho\right)}$. Then the eigenvalue equation
(\ref{N=000026C-condition}) is satisfied, and, by virtue of Corollary
\ref{corollary =00005Clambda_min}, $\lambda_{\min}=-\frac{1}{4}\left(\sum_{i=1}^{3}\left|t_{ii}\right|-1\right)$.
Since $\varrho$ is related to $\rho$ through local unitary operation,
Eq.$\,$(\ref{unitary-equivalent-eigenvalue-equation}) is also satisfied
for the same $\lambda_{\min}$ but for a different maximally entangled
eigenvector, say $\left|\Psi^{\prime\prime}\right\rangle $. Now given
that $C\left(\rho\right)>0$ and $F_{\rho}=F_{C\left(\rho\right)}$,
$\rho$ must have the property $\det T<0$. Thus the canonical $\varrho$
is described by (\ref{canonical-rho-detT<0}). In the appendix we
show that, for such a canonical $\varrho$, if $\varrho^{\Gamma}\left|\phi\right\rangle =-\frac{1}{4}\left(\sum_{i=1}^{3}\left|t_{ii}\right|-1\right)\left|\phi\right\rangle $,
where $\left|\phi\right\rangle $ is a normalized pure state, then
$\left|\phi\right\rangle =\left|\Phi^{+}\right\rangle $. This proves
the first part of the lemma. 

Now suppose the eigenvalue equation (\ref{Prop-5}) is satisfied,
where $\varrho$ is the canonical form of a two-qubit state of $\rho$
of concurrence $C\left(\rho\right)>0$. Since $\rho$ is entangled,
so is $\varrho$, and therefore, $\lambda_{\min}<0$. Then according
to Lemma \ref{unitary-relation} $\lambda_{\min}$ is also the minimum
eigenvalue of $\rho^{\Gamma}$ with the corresponding eigenvector
being maximally entangled, and therefore, the condition in Lemma \ref{saturation}
is satisfied. Hence, $F_{\rho}=F_{C\left(\rho\right)}$ and $\lambda_{\min}=-\frac{C\left(\rho\right)}{2}$.
From Corollary \ref{corollary =00005Clambda_min} we now conclude
that $\lambda_{\min}=-\frac{1}{4}\left(\sum_{i=1}^{3}\left|t_{ii}\right|-1\right)$.
This completes the proof. 
\end{proof}
Let us now find the conditions under which the eigenvalue equation
(\ref{Prop-5}) holds. 
\begin{lem}
\label{r+s=00003D0} The eigenvalue equation (\ref{Prop-5}) holds
if and only if $\bm{r}+\bm{s}=0$, where $\bm{r},\bm{s}$ are the
local vectors of $\varrho$. 
\end{lem}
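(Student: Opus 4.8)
The plan is to compute $\varrho^{\Gamma}\left|\Phi^{+}\right\rangle$ explicitly in the Bell basis and then read off the condition by orthogonality. First I would take the partial transpose of the canonical form (\ref{canonical-rho-detT<0}) on the second qubit. Since $\sigma_{2}^{T}=-\sigma_{2}$ while $\sigma_{1}^{T}=\sigma_{1}$ and $\sigma_{3}^{T}=\sigma_{3}$, the transpose flips the sign of the $I\otimes\sigma_{2}$ and $\sigma_{2}\otimes\sigma_{2}$ terms and leaves all others unchanged, so that
\begin{eqnarray*}
\varrho^{\Gamma} & = & \frac{1}{4}\Big(I\otimes I+r_{1}\sigma_{1}\otimes I+r_{2}\sigma_{2}\otimes I+r_{3}\sigma_{3}\otimes I+s_{1}I\otimes\sigma_{1}-s_{2}I\otimes\sigma_{2}+s_{3}I\otimes\sigma_{3}\\
 &  & -\left|t_{11}\right|\sigma_{1}\otimes\sigma_{1}+\left|t_{22}\right|\sigma_{2}\otimes\sigma_{2}-\left|t_{33}\right|\sigma_{3}\otimes\sigma_{3}\Big).
\end{eqnarray*}

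Next I would apply each term to $\left|\Phi^{+}\right\rangle$ and record the result in the orthonormal Bell basis $\left\{ \left|\Phi^{\pm}\right\rangle ,\left|\Psi^{\pm}\right\rangle \right\}$. The diagonal correlation operators satisfy $\left(\sigma_{i}\otimes\sigma_{i}\right)\left|\Phi^{+}\right\rangle =\left|\Phi^{+}\right\rangle$ for $i=1,3$ and $\left(\sigma_{2}\otimes\sigma_{2}\right)\left|\Phi^{+}\right\rangle =-\left|\Phi^{+}\right\rangle$, so together with the identity they contribute $\frac{1}{4}\left(1-\sum_{i}\left|t_{ii}\right|\right)\left|\Phi^{+}\right\rangle =\lambda_{\min}\left|\Phi^{+}\right\rangle$. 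The single-qubit operators, by contrast, rotate $\left|\Phi^{+}\right\rangle$ entirely into the three orthogonal Bell states: both $\sigma_{1}\otimes I$ and $I\otimes\sigma_{1}$ give $\left|\Psi^{+}\right\rangle$, both $\sigma_{3}\otimes I$ and $I\otimes\sigma_{3}$ give $\left|\Phi^{-}\right\rangle$, while $\sigma_{2}\otimes I$ gives $-i\left|\Psi^{-}\right\rangle$ and $I\otimes\sigma_{2}$ gives $+i\left|\Psi^{-}\right\rangle$. Accounting for the sign flip on the $s_{2}$ term, collecting everything yields
\begin{eqnarray*}
\varrho^{\Gamma}\left|\Phi^{+}\right\rangle & = & \lambda_{\min}\left|\Phi^{+}\right\rangle +\frac{1}{4}\left(r_{1}+s_{1}\right)\left|\Psi^{+}\right\rangle -\frac{i}{4}\left(r_{2}+s_{2}\right)\left|\Psi^{-}\right\rangle +\frac{1}{4}\left(r_{3}+s_{3}\right)\left|\Phi^{-}\right\rangle .
\end{eqnarray*}

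Finally, since the four Bell states are mutually orthogonal, the eigenvalue equation (\ref{Prop-5}) is equivalent to the vanishing of all three coefficients along $\left|\Psi^{+}\right\rangle$, $\left|\Psi^{-}\right\rangle$, and $\left|\Phi^{-}\right\rangle$; that is, $r_{i}+s_{i}=0$ for $i=1,2,3$, which is exactly $\bm{r}+\bm{s}=0$. This establishes both directions simultaneously.

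I do not expect a genuine obstacle here, since the argument reduces to a single direct computation. The only delicate point is the sign bookkeeping: the minus sign picked up by the $\sigma_{2}$ terms under partial transposition, together with the accompanying phases when the $\sigma_{2}$ operators act on $\left|\Phi^{+}\right\rangle$, must be tracked carefully, because a slip there would spoil the clean factorisation of the local-vector contribution into the combinations $r_{i}+s_{i}$. The conceptual heart of the lemma is simply the observation that the correlation (and identity) part of $\varrho^{\Gamma}$ keeps $\left|\Phi^{+}\right\rangle$ an eigenvector with eigenvalue $\lambda_{\min}$, whereas the local vectors feed only into the three-dimensional orthogonal complement of $\left|\Phi^{+}\right\rangle$, thereby decoupling the eigenvalue condition into the three scalar equations $r_{i}+s_{i}=0$.
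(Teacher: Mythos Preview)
Your proposal is correct and follows essentially the same route as the paper's own proof: both compute $\varrho^{\Gamma}\left|\Phi^{+}\right\rangle$ directly, expand the result in the Bell basis to obtain
\[
\varrho^{\Gamma}\left|\Phi^{+}\right\rangle=\frac{1}{4}\left[(r_{1}+s_{1})\left|\Psi^{+}\right\rangle-i(r_{2}+s_{2})\left|\Psi^{-}\right\rangle+(r_{3}+s_{3})\left|\Phi^{-}\right\rangle+\Bigl(1-\textstyle\sum_{i}\left|t_{ii}\right|\Bigr)\left|\Phi^{+}\right\rangle\right],
\]
and then read off the equivalence from orthogonality of the Bell states. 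Your write-up is a bit more explicit about the partial-transpose sign on the $\sigma_{2}$ terms, but the argument is the same.
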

The proof is given in the appendix. 

Lemma \ref{r+s=00003D0} only gives us a necessary condition the states
with the largest maximal fidelity must satisfy for any given concurrence.
But Lemma \ref{r+s=00003D0} together with Lemma \ref{t_ii=00003D2C+1}
provide the necessary and sufficient conditions for the largest maximal
fidelity states for any given concurrence $\mathcal{C}>0$. 
\begin{thm}
Let $\rho$ be a two-qubit state. Then, for a given concurrence $\mathcal{C}>0$,
$C\left(\rho\right)=\mathcal{C}$ and $F_{\rho}=F_{\mathcal{C}}$
provided the conditions
\begin{eqnarray*}
\bm{r}+\bm{s} & = & 0,\\
\sum_{i=1}^{3}\left|t_{ii}\right| & = & 2\mathcal{C}+1
\end{eqnarray*}
are met simultaneously, where $\bm{r},\bm{s}$ are the local vectors
associated with $\varrho$ -- the canonical form of $\rho$. 
\end{thm}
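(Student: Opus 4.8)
The plan is to prove the theorem by assembling three of the lemmas established above into a single chain of implications, after first checking that the hypotheses actually place $\rho$ in the regime where those lemmas are valid. Since the two concurrence lemmas, Lemma~\ref{varrho-N+Scondition} and Lemma~\ref{t_ii=00003D2C+1}, both presuppose $C(\rho)>0$ (and, implicitly, $\det T<0$, so that $\varrho$ has the canonical form \eqref{canonical-rho-detT<0} and $F_\rho$ is given by \eqref{F-rho}), the first task is to \emph{deduce} these facts from the stated conditions rather than assume them.

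First I would use the second hypothesis to establish the regime. Because $\mathcal{C}>0$, the constraint $\sum_{i=1}^{3}|t_{ii}|=2\mathcal{C}+1$ forces $\sum_{i=1}^{3}|t_{ii}|>1$. By the criterion recalled in the Preliminaries (a two-qubit state is useful iff $\sum_i|t_{ii}|>1$), $\rho$ is then useful, so $F_\rho>\tfrac{2}{3}$; this rules out separability, giving $C(\rho)>0$, and forces $\det T<0$. Hence all the concurrence lemmas apply and $\varrho$ is of the form \eqref{canonical-rho-detT<0}.

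Next I would feed in the first hypothesis. By Lemma~\ref{r+s=00003D0}, the condition $\bm{r}+\bm{s}=0$ is equivalent to the eigenvalue equation \eqref{Prop-5}, so \eqref{Prop-5} holds. With $C(\rho)>0$ now secured, the ``if'' direction of Lemma~\ref{varrho-N+Scondition} converts \eqref{Prop-5} into the equality $F_\rho=F_{C(\rho)}$. Finally, Lemma~\ref{t_ii=00003D2C+1} turns this equality back into the identity $\sum_{i=1}^{3}|t_{ii}|=2C(\rho)+1$; comparing with the second hypothesis $\sum_{i=1}^{3}|t_{ii}|=2\mathcal{C}+1$ gives $C(\rho)=\mathcal{C}$, and therefore $F_\rho=F_{C(\rho)}=F_{\mathcal{C}}$, completing the argument. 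Equivalently, one may read the eigenvalue $-\tfrac{1}{4}(\sum_i|t_{ii}|-1)$ off \eqref{Prop-5} and combine it with $\lambda_{\min}=-C(\rho)/2$ via Corollary~\ref{corollary =00005Clambda_min} to get $C(\rho)=\tfrac{1}{2}(\sum_i|t_{ii}|-1)=\mathcal{C}$ in one line.

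The genuine mathematical content sits entirely inside the cited lemmas, so this theorem is largely a bookkeeping assembly of them; the only point requiring care is the bootstrapping step. The hypotheses supply $\mathcal{C}>0$ but not $C(\rho)>0$ directly, and the three concurrence lemmas are silent unless the state is already known to be entangled with $\det T<0$. Thus the main obstacle I anticipate is not a computation but keeping the logical order sound: one must extract entanglement of $\rho$ from the trace constraint $\sum_i|t_{ii}|=2\mathcal{C}+1>1$ \emph{before} invoking Lemmas~\ref{varrho-N+Scondition} and \ref{t_ii=00003D2C+1}, rather than reasoning circularly from a conclusion that presupposes what is to be shown.
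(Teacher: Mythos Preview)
Your proposal is correct and follows essentially the same route as the paper: from $\sum_i|t_{ii}|=2\mathcal{C}+1>1$ deduce usefulness, hence entanglement and $\det T<0$; then use Lemma~\ref{r+s=00003D0} and Lemma~\ref{varrho-N+Scondition} to obtain $F_\rho=F_{C(\rho)}$, and finally Lemma~\ref{t_ii=00003D2C+1} to read off $C(\rho)=\mathcal{C}$. The only step the paper spells out that you leave implicit is the identification of $-\tfrac{1}{4}\bigl(\sum_i|t_{ii}|-1\bigr)$ with $\lambda_{\min}$: the paper notes that this value is negative (since $\sum_i|t_{ii}|>1$) and that the partial transpose of an entangled two-qubit state has exactly one negative eigenvalue, so it must be the minimum; you absorb this into your citation of Lemma~\ref{r+s=00003D0}, which is fine given how that lemma is phrased.
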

\begin{proof}
Let $\rho$ be a two-qubit state with $t_{ii}$, $i=1,2,3$ being
the eigenvalues of the $T$ matrix and let $\varrho$ be the canonical
form of $\rho$ with $\bm{r}$ and $\bm{s}$ being the local vectors.
Assume that both equations in the theorem are satisfied. 

Given that $\mathcal{C}>0$, the second equation implies $\sum_{i=1}^{3}\left|t_{ii}\right|>1$.
Thus $\rho$ is entangled (and so is, $\varrho$) and has the property
$\det T<0$. Therefore $\varrho$ admits the form given by (\ref{canonical-rho-detT<0}).
Now $\bm{r}+\bm{s}=0$ implies that (from Lemma \ref{r+s=00003D0})
for such a canonical $\varrho$ the eigenvalue equation 
\begin{eqnarray*}
\varrho^{\Gamma}\left|\Phi^{+}\right\rangle  & = & -\frac{1}{4}\left(\sum_{i=1}^{3}\left|t_{ii}\right|-1\right)\left|\Phi^{+}\right\rangle 
\end{eqnarray*}
is satisfied, where the eigenvalue must be negative because $\stackrel[i=1]{3}{\sum}\left|t_{ii}\right|>1$.
Now the partial transposed matrix of a two-qubit entangled state has
exactly one negative eigenvalue. Therefore, $-\frac{1}{4}\left(\sum_{i=1}^{3}\left|t_{ii}\right|-1\right)$
is the minimum eigenvalue of $\varrho^{\Gamma}$. Then from Lemma
\ref{varrho-N+Scondition} we conclude that $F_{\rho}=F_{C\left(\rho\right)}$.
But we know that for any state $\rho$ of concurrence $C\left(\rho\right)$
if the equality $F_{\rho}=F_{C\left(\rho\right)}$ holds, then $\sum_{i=1}^{3}\left|t_{ii}\right|=2C\left(\rho\right)+1$.
Hence, $C\left(\rho\right)=\mathcal{C}$. 
\end{proof}
It is clear that the fidelity deviation of a state with the largest
maximal fidelity is nonzero in general because to satisfy Eq.$\,\left(\ref{t(ii)related-to-Concurrence}\right)$
the absolute values of $t_{ii}$, $i=1,2,3$ need not be equal. 

By definition, the optimal states for a given concurrence are the
states with the largest maximal fidelity and zero fidelity deviation.
Now that we have already identified the largest maximal fidelity states
for any given concurrence $\mathcal{C}>0$, we can apply the condition
for zero fidelity deviation which demands that $\left|t_{ii}\right|$,
$i=1,2,3$ are all equal. One can now use Eq.$\,$(\ref{t(ii)related-to-Concurrence})
to obtain $\left|t_{ii}\right|=\frac{2\mathcal{C}+1}{3}$, $i=1,2,3$.
Thus the canonical form of an optimal state for a fixed concurrence
$\mathcal{C}>0$ is given by 
\begin{eqnarray}
\varrho_{{\rm opt}} & = & \frac{1}{4}\left(I\otimes I+\bm{r}\cdot\bm{\sigma}\otimes I-I\otimes\bm{r}\cdot\bm{\sigma}-\frac{2\mathcal{C}+1}{3}\sum_{i=1}^{3}\sigma_{i}\otimes\sigma_{i}\right),\label{HS-X-1-1}
\end{eqnarray}
where we have used the fact that $\bm{r}+\bm{s}=0$ and $\left|t_{ii}\right|=\frac{2\mathcal{C}+1}{3}$
for all $i=1,2,3$. 

\section{Conclusions \label{conclusions}}

The average fidelity \citep{RMP-Horodecki-96,MPR-Horodecki-1999}
is generally considered to be the measure of choice to characterize
quantum teleportation. But recently \citep{Bang-et-al-2018,Ghosal-deviation-2019}
it has been emphasized that in addition to the average fidelity one
should also take into account fidelity deviation, which is defined
as the standard deviation of fidelity over all input states and serves
as a well-defined measure of fluctuations in fidelity. For two-qubit
states the maximal average fidelity (maximal fidelity) \citep{RMP-Horodecki-96,Badziag-2000}
and the corresponding fidelity deviation \citep{Ghosal-deviation-2019}
are known, where both are given by simple formulas that can also be
exactly computed. So for two-qubit states a comprehensive characterization
of quantum teleportation is possible. 

In \citep{Ghosal-deviation-2019} it was pointed out that fidelity
deviation can serve as a useful filter to select the optimal states
for quantum teleportation from a known set of states. In particular,
for a given set of states, where every state in the set has the same
maximal fidelity, the most desirable states are those with zero fidelity
deviation. In this paper, we applied this idea in a more general setting.
Specifically, we characterized two-qubit optimal states -- the states
with the largest maximal fidelity and zero fidelity deviation --
for a given linear entropy $L$, the maximum mean value $B$ of the
Bell-CHSH observable, and concurrence $C$, respectively. For our
analysis, we extensively used the canonical description of a two-qubit
density matrix. This greatly simplified calculations, especially in
the cases of purity and Bell-nonlocality for which we were able to
find the largest maximal fidelity states by solving appropriate constrained
optimization problems. On the other hand, for a given entanglement,
we had to consider a different approach altogether for reasons explained
earlier. 

We found that for given purity and Bell-CHSH violation, respectively,
the largest maximal fidelity states also exhibit zero fidelity deviation,
and therefore, they are optimal. For a given concurrence, however,
not all largest maximally fidelity states have zero fidelity deviation;
thus the optimal states form a strict subset of the largest maximal
fidelity states. The optimal states in general have the following
properties: 
\begin{itemize}
\item The eigenvalues of the correlation matrix $T$ are functions of the
given value of the state property under consideration. 
\item The local vectors satisfy certain conditions (except, for optimal
Bell-nonlocal states): in case of linear entropy the local vectors
must be zero, and for entanglement the sum of the local vectors appearing
in the canonical form must vanish. 
\end{itemize}
Are there new insights to be gained from the aforementioned results?
We believe so and there are two in our opinion. First of all, the
results show that state properties can have an essential role in characterizing
optimal states, which, in fact, holds for the ones we considered.
Second, our analysis widens the perspective about quantum teleportation
and right away shows some counter-intuitive consequences. For example,
if one considers a pure state, a Werner derivative state, and a Werner
state of the same concurrence, they all yield the same maximal fidelity.
Therefore, in terms of fidelity alone, these states are all equally
good. Our characterization involving fidelity deviation, however,
lifts this degeneracy and Werner states emerge as the clear winner,
followed by Werner derivative states, and, quiet surprisingly, the
pure state is the worst among the three. 

Let us now discuss possible applications of our results. The state
properties that we considered, namely, entanglement, Bell nonlocality,
and purity, are often used to characterize resources in quantum information
processing tasks. In practical situations, however, one should try
to find the 'best performing quantum states' within available resources.
Our paper addresses this question within the set of two-qubit states,
and moreover, our characterization involving fluctuations moves us
closer to addressing issues during practical (experimental) implementations
of quantum teleportation -- for example, teleportation might be implemented
as an intermediate step in a quantum circuit to be later processed
by some gates that are typically sensitive to fluctuations of their
inputs (see, for example, \citep{Pedersen-2008}) and here, if one
would have used just the average fidelity the analysis would have
simply been less complete. So we believe that the results will be
beneficial for future experimental implementations of quantum teleportation. 

In summary, the primary objective in any quantum information protocol
is to use available resources in best possible ways. The classification
of resource states for quantum teleportation via average fidelity
provides the first step of segregating good resource states, and as
our results demonstrate, fidelity deviation provides another layer
of filtration to sieve out better states from a set of states that
are equally good in terms of fidelity. The results presented here
identify the best possible resource states from a set of states with
fixed concurrence, or linear entropy, or Bell violation, and we hope
that results of this kind would help us to better understand quantum
teleportation in relation to state properties of resource states. 

Finally, the results in our previous work \citep{Ghosal-deviation-2019}
and the present paper show that fidelity deviation in many ways complements
maximal fidelity, which has long been regarded as the sole figure
of merit for quantum teleportation. So far analyses \citep{Bang-et-al-2018,Ghosal-deviation-2019}
have been confined to two-qubit states only. It would be interesting
to study how these results could be generalized in higher dimensions
and also in many-qubit systems, especially in network structures using
quantum repeaters. 
\begin{acknowledgement*}
DD acknowledges financial support from University Grants Commission
(UGC), Government of India. SB is supported in part by SERB (Science
and Engineering Research Board), Department of Science and Technology,
Government of India through Project No. EMR/2015/002373. 
\end{acknowledgement*}

\section*{Appendix}

\subsection{Proof of Lemma \ref{unitary-relation}}

From the given relation 
\begin{eqnarray*}
\rho^{\prime} & = & \left(U\otimes V\right)\rho\left(U^{\dagger}\otimes V^{\dagger}\right)
\end{eqnarray*}
one can express $\rho$ as 
\begin{eqnarray*}
\rho & = & \left(U^{\dagger}\otimes V^{\dagger}\right)\rho^{\prime}\left(U\otimes V\right).
\end{eqnarray*}
Then the eigenvalue equation
\begin{eqnarray*}
\rho^{\Gamma}\left|\Psi\right\rangle  & = & \lambda_{\min}\left|\Psi\right\rangle ,
\end{eqnarray*}
 can be written as
\begin{eqnarray*}
\left[\left(U^{\dagger}\otimes V^{\dagger}\right)\rho^{\prime}\left(U\otimes V\right)\right]^{\Gamma}\left|\Psi\right\rangle  & = & \lambda_{\min}\left|\Psi\right\rangle .
\end{eqnarray*}
Taking the partial transpose with respect to the second subsystem,
the above equation can be written in the form
\begin{eqnarray*}
\left[\left(U^{\dagger}\otimes V^{*}\right)\left(\rho^{\prime}\right)^{\Gamma}\left(U\otimes V^{T}\right)\right]\left|\Psi\right\rangle  & = & \lambda_{\min}\left|\Psi\right\rangle .
\end{eqnarray*}
Now multiplying both sides with $\left(U\otimes V^{T}\right)$ from
the left we obtain 
\begin{eqnarray*}
\left(\rho^{\prime}\right)^{\Gamma}\left(U\otimes V^{T}\right)\left|\Psi\right\rangle  & = & \lambda_{\min}\left(U\otimes V^{T}\right)\left|\Psi\right\rangle .
\end{eqnarray*}
Denoting $\left|\Psi^{\prime}\right\rangle =\left(U\otimes V^{T}\right)\left|\Psi\right\rangle $
we arrive at (\ref{unitary-equivalent-eigenvalue-equation}). This
completes the proof. 

\subsection{Proof of part of Lemma \ref{varrho-N+Scondition} }

Let $\varrho$ be the canonical form of a two-qubit density matrix
$\rho$ of concurrence $C\left(\rho\right)$. Since $F_{\rho}=F_{\mathcal{C}}$,
we know that Lemma \ref{saturation} is satisfied. Then according
to Lemma \ref{unitary-relation} the following eigenvalue equation
holds: 
\begin{eqnarray}
\varrho^{\Gamma}\left|\Psi^{\prime\prime}\right\rangle  & = & -\frac{1}{4}\left(\sum_{i=1}^{3}\left|t_{ii}\right|-1\right)\left|\Psi^{\prime\prime}\right\rangle ,\label{Proposition-4-equation-1}
\end{eqnarray}
where $\left|\Psi^{\prime\prime}\right\rangle $ is a maximally entangled
state and $t_{ii}$, $i=1,2,3$, are the eigenvalues of $T$ associated
with $\rho$. We will show that $\left|\Psi^{\prime\prime}\right\rangle =\left|\Phi^{+}\right\rangle $,
where $\left|\Phi^{+}\right\rangle =\frac{1}{\sqrt{2}}\left(\left|00\right\rangle +\left|11\right\rangle \right)$. 

Now suppose the eigenvalue equation 
\begin{eqnarray}
\varrho^{\Gamma}\left|\phi\right\rangle  & = & -\frac{1}{4}\left(\sum_{i=1}^{3}\left|t_{ii}\right|-1\right)\left|\phi\right\rangle \label{eigen-equation}
\end{eqnarray}
is satisfied for some normalized two-qubit pure state $\left|\phi\right\rangle $,
where $\varrho$ is given by (\ref{canonical-rho-detT<0}). Then the
identity 
\begin{eqnarray}
\left\langle \phi\left|\varrho^{\Gamma}\right|\phi\right\rangle  & = & \frac{1}{4}\left(1-\sum_{i=1}^{3}\left|t_{ii}\right|\right)\label{expectation=00003Deigenvalue}
\end{eqnarray}
holds. Note that while (\ref{eigen-equation}) implies (\ref{expectation=00003Deigenvalue})
the converse in general does not hold because, while the expectation
value will always produce a number, it may not be the eigenvalue. 

Now any two-qubit pure state $\left|\phi\right\rangle $ can be written
as a linear combination of the four Bell states 
\begin{eqnarray*}
\left|\phi\right\rangle  & = & a_{1}\left|\Phi^{+}\right\rangle +a_{2}\left|\Phi^{-}\right\rangle +a_{3}\left|\Psi^{+}\right\rangle +a_{4}\left|\Psi^{-}\right\rangle ,
\end{eqnarray*}
where $a_{i}\in\mathbb{C},$$i=1,\dots,4$, $\sum_{i=1}^{4}\left|a_{i}\right|^{2}=1$,
and the Bell-states are given by 
\begin{eqnarray*}
\left|\Phi^{\pm}\right\rangle  & = & \frac{1}{\sqrt{2}}\left(\left|00\right\rangle \pm\left|11\right\rangle \right),\\
\left|\Psi^{\pm}\right\rangle  & = & \frac{1}{\sqrt{2}}\left(\left|01\right\rangle \pm\left|10\right\rangle \right).
\end{eqnarray*}
The LHS of (\ref{expectation=00003Deigenvalue}) leads to 
\begin{eqnarray*}
\left\langle \phi\left|\varrho^{\Gamma}\right|\phi\right\rangle  & = & \frac{1}{4}+\frac{1}{4}\left[\left\langle \phi\left|\bm{r}\cdot\bm{\sigma}\otimes I\right|\phi\right\rangle +\left\langle \phi\left|I\otimes\left(\bm{s}\cdot\bm{\sigma}\right)^{T}\right|\phi\right\rangle \right]\\
 &  & +\frac{1}{4}\left[\left|t_{11}\right|\left(-\left|a_{1}\right|^{2}+\left|a_{2}\right|^{2}-\left|a_{3}\right|^{2}+\left|a_{4}\right|^{2}\right)\right]\\
 &  & +\frac{1}{4}\left[\left|t_{22}\right|\left(-\left|a_{1}\right|^{2}+\left|a_{2}\right|^{2}+\left|a_{3}\right|^{2}-\left|a_{4}\right|^{2}\right)\right]\\
 &  & +\frac{1}{4}\left[\left|t_{33}\right|\left(-\left|a_{1}\right|^{2}-\left|a_{2}\right|^{2}+\left|a_{3}\right|^{2}+\left|a_{4}\right|^{2}\right)\right],
\end{eqnarray*}
where we have taken the partial transposition with respect to the
second qubit. Now the second term on the right hand side is a function
of $\bm{r},\text{\ensuremath{\bm{s}}}$, so it must be equal to zero
because the RHS of (\ref{expectation=00003Deigenvalue}) does not
contain any term which is a function of the local vectors. So we end
up with the task of solving the following four equations 
\begin{eqnarray*}
-\left|a_{1}\right|^{2}+\left|a_{2}\right|^{2}-\left|a_{3}\right|^{2}+\left|a_{4}\right|^{2} & = & -1\\
-\left|a_{1}\right|^{2}+\left|a_{2}\right|^{2}+\left|a_{3}\right|^{2}-\left|a_{4}\right|^{2} & = & -1\\
-\left|a_{1}\right|^{2}-\left|a_{2}\right|^{2}+\left|a_{3}\right|^{2}+\left|a_{4}\right|^{2} & = & -1\\
\left|a_{1}\right|^{2}+\left|a_{2}\right|^{2}+\left|a_{3}\right|^{2}+\left|a_{4}\right|^{2} & = & +1,
\end{eqnarray*}
where the last equation is due to the normalization condition. The
above equations can be conveniently expressed in the matrix form as
\begin{eqnarray*}
\left(\begin{array}{cccc}
-1 & 1 & -1 & 1\\
-1 & 1 & 1 & -1\\
-1 & -1 & 1 & 1\\
1 & 1 & 1 & 1
\end{array}\right)\left(\begin{array}{c}
\left|a_{1}\right|^{2}\\
\left|a_{2}\right|^{2}\\
\left|a_{3}\right|^{2}\\
\left|a_{4}\right|^{2}
\end{array}\right) & = & \left(\begin{array}{c}
-1\\
-1\\
-1\\
1
\end{array}\right).
\end{eqnarray*}
The $4\times4$ matrix on the left is invertible; hence, 
\begin{eqnarray*}
\left(\begin{array}{c}
\left|a_{1}\right|^{2}\\
\left|a_{2}\right|^{2}\\
\left|a_{3}\right|^{2}\\
\left|a_{4}\right|^{2}
\end{array}\right) & = & \left(\begin{array}{cccc}
-1 & 1 & -1 & 1\\
-1 & 1 & 1 & -1\\
-1 & -1 & 1 & 1\\
1 & 1 & 1 & 1
\end{array}\right)^{-1}\left(\begin{array}{c}
-1\\
-1\\
-1\\
1
\end{array}\right)\\
 & = & \frac{1}{4}\left(\begin{array}{cccc}
-1 & -1 & -1 & 1\\
1 & 1 & -1 & 1\\
-1 & 1 & 1 & 1\\
1 & -1 & 1 & 1
\end{array}\right)\left(\begin{array}{c}
-1\\
-1\\
-1\\
1
\end{array}\right)\\
 & = & \left(\begin{array}{c}
1\\
0\\
0\\
0
\end{array}\right)
\end{eqnarray*}
Thus we have shown that $\left|\psi\right\rangle =\left|\Phi^{+}\right\rangle $
(up to some irrelevant global phase). 

\subsection{Proof of Lemma \ref{r+s=00003D0}}

Let us first obtain an expression for $\varrho^{\Gamma}\left|\Phi^{+}\right\rangle $,
where $\varrho$ is given by (\ref{canonical-rho-detT<0}), and the
partial transposition is taken with respect to the second qubit. 
\begin{eqnarray*}
\varrho^{\Gamma}\left|\Phi^{+}\right\rangle  & = & \frac{1}{4}\left(I\otimes I+\bm{r\cdot\sigma}\otimes I+I\otimes\left(\bm{s\cdot\sigma}\right)^{T}-\sum_{i=1}^{3}\left|t_{ii}\right|\sigma_{i}\otimes\sigma_{i}^{T}\right)\left|\Phi^{+}\right\rangle \\
 & = & \frac{1}{4}\left[\left\{ \bm{r\cdot\sigma}\otimes I+I\otimes\left(\bm{s\cdot\sigma}\right)^{T}\right\} \left|\Phi^{+}\right\rangle +\left(1-\sum_{i=1}^{3}\left|t_{ii}\right|\right)\left|\Phi^{+}\right\rangle \right]\\
 & = & \frac{1}{4}\left[\left(r_{1}+s_{1}\right)\left|\Psi^{+}\right\rangle -i\left(r_{2}+s_{2}\right)\left|\Psi^{-}\right\rangle +\left(r_{3}+s_{3}\right)\left|\Phi^{-}\right\rangle +\left(1-\sum_{i=1}^{3}\left|t_{ii}\right|\right)\left|\Phi^{+}\right\rangle \right].
\end{eqnarray*}
Therefore, if $\left|\Phi^{+}\right\rangle $ is an eigenvector of
$\varrho^{\Gamma}$ it holds that $r_{i}+s_{i}=0$ for all $i=1,2,3$.
Conversely, if $r_{i}+s_{i}=0$ for all $i=1,2,3$, it immediately
follows that $\left|\Phi^{+}\right\rangle $ is the eigenvector of
$\varrho^{\Gamma}$ with eigenvalue $-\frac{1}{4}\left(\stackrel[i=1]{3}{\sum}\left|t_{ii}\right|-1\right)$.
This completes the proof.

\end{document}